\newcommand{\sub}{\subseteq}
\newcommand{\tuple}[1]{\vec{#1}}
\newcommand {\key}[1] {k({#1})}
\newcommand {\indepc}[2] {#1 ~\bot~ #2}
\newcommand{\Dom}{\textrm{Dom}}
\newcommand{\Si}{\Sigma}
\newcommand{\on}{\exists}
\newcommand{\ja}{\wedge}
\newcommand{\yli}{\overline}
\newcommand{\Cu}[1]{\textrm{Cl}_{\rm \uparrow}(#1)}
\def\boto{\mkern1.5mu\bot\mkern2.5mu}
\newcommand{\vdashh}{\vdash_{\mathfrak{I}}}
\newcommand{\ax}[1]{\mathcal{R}{#1}}
\begin{document}

\author{Miika Hannula\inst{1} \and Juha Kontinen\inst{1}  \and Sebastian Link\inst{2} }

\institute{University of Helsinki, Department of Mathematics and Statistics, Helsinki, Finland \email{\{miika.hannula,juha.kontinen\}@helsinki.fi} \and {University of Auckland, Department of Computer Science, New Zealand \texttt{s.link@auckland.ac.nz}}}

\title{On Independence Atoms and  Keys\thanks{The first two authors were supported by grant 264917 of the Academy of Finland.}}

\maketitle

\begin{abstract}
Uniqueness and independence are two fundamental properties of data. Their enforcement in database systems can lead to higher quality data, faster data service response time, better data-driven decision making and knowledge discovery from data. The applications can be effectively unlocked by providing efficient solutions to the underlying implication problems of keys and independence atoms. Indeed, for the sole class of keys and the sole class of independence atoms the associated finite and general implication problems coincide and enjoy simple axiomatizations. However, the situation changes drastically when keys and independence atoms are combined. We show that the finite and the general implication problems are already different for keys and unary independence atoms. Furthermore, we establish a finite axiomatization for the general implication problem, and show that the finite implication problem does not enjoy a $k$-ary axiomatization for any $k$.
\end{abstract}

\section{Introduction}
Keys and independence atoms are two classes of data dependencies that enforce the uniqueness and independence of data in database systems. Keys are one of the most important classes of integrity constraints as effective data processing largely depends on the identification of data records. Their importance is manifested in the de-facto industry standard for data management, SQL, and they enjoy native support in every real-world database system. A relation $r$ satisfies the key $k(X)$ for a set $X$ of attributes, if for all tuples $t_1,t_2\in r$ it is true that $t_1=t_2$ whenever $t_1$ and $t_2$ have matching values on all the attributes in $X$. Independence atoms also occur naturally in data processing, including query languages. For example, one of the most fundamental operators in relational algebra is the Cartesian product, combining every tuple from one relation with every tuple from a second relation. In SQL, users must specify this database operation in form of the \texttt{FROM} clause. A relation $r$ satisfies the independence atom $X\boto Y$ between two sets $X$ and $Y$ of attributes, if for all tuples $t_1,t_2\in r$ there is some tuple $t\in r$ which matches the values of $t_1$ on all attributes in $X$ and matches the values of $t_2$ on all attributes in $Y$. In other words, in relations that satisfy $X\boto Y$, the occurrence of $X$-values is independent of the occurrence of $Y$-values. Due to their fundamental importance in everyday data processing in practice, both keys and independence atoms have also received detailed interest from the research community since the 1970s \cite{DBLP:journals/jacm/BojanczykMSS09,DBLP:journals/ipl/Demetrovics78,DBLP:journals/tcs/DemetrovicsKMST98,DBLP:journals/tods/Fagin81,geiger:1991,DBLP:conf/wollic/KontinenLV13,DBLP:journals/jcss/LucchesiO78,DBLP:conf/icdt/NiewerthS11,paredaens:1980,DBLP:journals/is/Wijsen09}. One of the core problems studied for approximately 100 different classes of relational data dependencies alone are their associated implication problems \cite{thalheim:1991}. Efficient solutions to these problems have their applications in database design, query and update processing, data cleaning, exchange, integration and security to name a few. Both classes of keys and independence atoms in isolation enjoy efficient computational properties: finite and general implication problems coincide, and are axiomatizable by finite sets of Horn rules, respectively \cite{paredaens:1980,geiger:1991,DBLP:conf/wollic/KontinenLV13,thalheim:1991}.

Given their importance for data processing in practice, given that keys and independence atoms naturally co-exist and given the long and fruitful history of research into relational data dependencies, it is rather surprising that keys and independence atoms have not been studied together. For an illustrative example of their interaction consider the SQL query $Q$
\begin{center}
\begin{tabular}{l@{\hspace*{1.25cm}}l}
Query $Q$: & Query $Q'$: \\
\texttt{SELECT} p.id, \texttt{COUNT}(\texttt{DISTINCT} s.id) & \texttt{SELECT} p.id, \texttt{COUNT}(s.id) \\
\texttt{FROM} \textsc{part} p, \textsc{supplier} s           & \texttt{FROM} \textsc{part} p, \textsc{supplier} s\\
\texttt{GROUP BY} p.id                                       & \texttt{GROUP BY} p.id \\
\end{tabular}
\end{center}
which returns for each part (identified by p.id) the number of distinct possible suppliers (identified by s.id). Here, the command \texttt{DISTINCT} is used to eliminate duplicate suppliers. In data processing duplicate elimination is time-consuming and not executed by default. However, duplicate elimination in query $Q$ is redundant. The \texttt{GROUP BY} clause uses p.id values to partition the Cartesian product $\textsc{part}\times\textsc{supplier}$, generated by the \texttt{FROM} clause, into sub-relations. That is, each sub-relation satisfies the independence atom $\textsc{part}\boto\textsc{part}$. As the Cartesian product $\textsc{part}\times\textsc{supplier}$ satisfies the key $k(p.id,s.id)$, so does each of its sub-relations. However, the key $k(p.id,s.id)$ and the independence atom $\textsc{part}\boto\textsc{part}$ together imply the key $k(s.id)$. Hence, there are no duplicate s.id values in any sub-relation and $Q$ can be replaced by the more efficient query $Q'$.

Motivated by these strong real-world applications and the lack of previous research we study the interaction of key dependencies and independence atoms. Somewhat surprisingly, the good computational properties that hold for each class in isolation do not carry over to the combined class. In fact, we show that for the combined class of keys and independence atoms:
\begin{itemize}
\item The finite and the general implication problem differ from one another,
\item For keys and unary independence atoms the general implication problem has a 2-ary axiomatization by Horn rules, but
\item Their finite implication problem is not finitely axiomatizable.
\end{itemize}

Our results are somewhat similar to those known for the combined class of functional dependencies (FDs) and inclusion dependencies (INDs). While both classes in isolation have matching finite and general implication problems and enjoy finite axiomatizations, the finite and the general implication problem differ for the combined class of FDs and unary INDs already \cite{DBLP:journals/jcss/CasanovaFP84}. For FDs and unary INDs the general implication problem has a 2-ary axiomatization by Horn rules \cite{DBLP:journals/jacm/CosmadakisKV90}, while their finite implication problem is not finitely axiomatizable \cite{DBLP:journals/jcss/CasanovaFP84}. Interestingly, key dependencies are strictly subsumed by FDs. It is also known that both implication problems are undecidable for FDs and INDs \cite{chandra85,mitchell83}, but decidable for FDs and unary INDs \cite{DBLP:journals/jacm/CosmadakisKV90}. We would also like to mention that independence atoms form an efficient fragment of embedded multivalued dependencies whose expressivity results in the non-axiomatizability of its implication problem by a finite set of Horn rules \cite{parker:1980} and its undecidability \cite{herrmann:2006}.

Our work is further motivated by the recent development of the area of dependence logic constituting a novel approach to the study of various notions of dependence and independence that is intimately linked with databases and their data dependencies \cite{gradel10,vaananen07}. It has been shown recently, e.g. that the general implication problem of so-called conditional independence atoms and inclusion atoms can be finitely axiomatized in this context \cite{DBLP:conf/foiks/HannulaK14}. For databases, this result establishes a finite axiomatization (utilizing implicit existential quantification) of the general implication problem for inclusion, functional, and  embedded multivalued dependencies taken together. This result is similar to the axiomatization of the general implication problem for FDs and INDs \cite{mitchell83}.


\section{Preliminaries}

\subsection{Definitions}
A \emph{relation schema} $R$ is a set of symbols $A$ called \emph{attributes}, each equipped with a domain $\Dom(A)$ representing the possible values that can occur in the column named $A$. A tuple $t$ over $R$ is a mapping $R \rightarrow \bigcup_{A \in R} \Dom(A)$ where $t(A) \in \Dom(A)$ for each $A \in R$. For a tuple $t$ over $R$ and $R' \sub R$, $t(R')$ is the restriction of $t$ on $R'$.  A \emph{relation} $r$ over $R$ is a set of tuples $t$ over $R$. If $R' \sub R$ and $r$ is a relation over $R$, then we write $r(R')$ for $\{t( R') : t\in r\}$. If $A\in R$ is an attribute and $r$ is a relation over $R $,  then we write $r(A = a)$ for $\{t \in r: t(A)=a\}$. For sets of attributes $X$ and $Y$, we often write $XY$ for $X \cup Y$, and denote singleton sets of attributes $\{A\}$ by $A$. Also, for a relation schema $A_1\ldots A_n$, a relation $r(A_1\ldots A_n)$ is sometimes identified with the set notation $\{(a_1, \ldots ,a_n) \mid \on t\in r : t(A_i)= a_i$ for $1 \leq i \leq n\}$.


\subsection{Independence Atoms and Key Dependencies}

Let $R$ be a relation schema and $X \sub R$. Then $\key{X}$ is a $R$-\emph{key}, given the following semantic rule for a relation $r$ over $R$:
\begin{itemize}
\item[S-K]$ r \models \key{X}$ if and only if $\forall t,t' \in r: t(X) = t'(X) \Rightarrow t=t'$.
\end{itemize}

Let $R$ be a relation schema and $X,Y \sub R$. Then $X \boto Y$ is a $R$-\emph{independence atom}, given the following semantic rule for a relation $r$ over $R$:
\begin{itemize}
\item[S-I]$ r \models X \boto Y$ if and only if $\forall t,t' \in r\on t''\in r: t''(X)=t(X) \ja t''(Y) = t'(Y)$.
\end{itemize}
An independence atom $\indepc{X}{Y}$ is called unary if $X$ and $Y$ are single attributes. $R$-keys and $R$-independence atoms are together called $R$-\emph{constraints}. If $\Si$ is a set of $R$-constraints and $R' \sub R$, then we write $\Si \upharpoonright R'$ for the subset of all $R'$-constraints of $\Si$.

\subsection{Implication Problems}

For a set $\Sigma\cup\{\phi\}$ of independence atoms and keys we say that $\Sigma$ \textit{implies} $\phi$, written $\Sigma\models\phi$, if every relation that satisfies every element in $\Sigma$ also satisfies $\phi$. We write  $\Sigma\models_{\rm FIN}\phi$, if  every finite relation that satisfies every element in $\Sigma$ also satisfies $\phi$. We say that $\phi$ is a \emph{$k$-ary} (finite) \emph{implication} of $\Si$, if there exists $\Si' \sub \Si$ such that $|\Si'| \leq k$ and $\Si' \models \phi$ ($\Si' \models_{\rm FIN} \phi$).

In this article we consider the axiomatizability of the so-called finite and the general  implication problem for unary independence atoms and keys. The general implication problem for independence atoms and keys  is defined as follows.
\begin{center}
\begin{tabular}{|@{\hspace*{.25cm}}l@{\hspace*{.1cm}}l@{\hspace*{.25cm}}|}\hline
PROBLEM: & General implication problem for independence atoms and keys\\ \hline
INPUT:   & Relation schema $R$, \\
         & Set $\Sigma\cup\{\varphi\}$ of independence atoms and keys  over $R$ \\
OUTPUT:  & Yes, if $\Sigma\models\varphi$; No, otherwise \\ \hline
\end{tabular}
\end{center}
The finite implication problem is defined analogously by replacing  $\Sigma\models \phi$ with  $\Sigma\models _ {\rm FIN} \phi$.

For a set $\mathfrak{I}$ of inference rules,  we denote by $\Sigma\vdash_\mathfrak{I}\phi$  the \emph{inference} of $\phi$ from $\Sigma$. That is, there is some sequence $\gamma=[\sigma_1,\ldots,\sigma_n]$ of independence atoms and keys such that $\sigma_n=\phi$ and every $\sigma_i$ is an element of $\Sigma$ or results from an application of an inference rule in $\mathfrak{I}$ to some elements in $\{\sigma_1,\ldots,\sigma_{i-1}\}$.
 A set $\mathfrak{I}$ of inference rules is said to be \emph{sound} for the general implication problem of independence atoms and keys, if for every $R$ and for every set $\Sigma$,
 $\Sigma\vdash_\mathfrak{I}\phi$ implies that $\Sigma\models \phi$.
A set $\mathfrak{I}$ is called \emph{complete} for the general implication problem if
 $\Sigma\models \phi$ implies that $\Sigma\vdash_\mathfrak{I}\phi$.
The (finite) set $\mathfrak{R}$ is said to be a (finite) \emph{axiomatization} of the general implication for independence atoms and keys if $\mathfrak{R}$ is both sound and complete. These notions are defined analogously for the finite implication problem. For $k\geq 1$, an axiomatization $\mathfrak{R}$ is called $k$-ary if all the rules of $\mathcal{R}$ are of the form
$$
\cfrac{A_1 \quad A_2 \quad \ldots \quad  A_{l-1} \quad A_l}{B}
$$
where $l \leq k$.

\section{The General Implication Problem}
In this section we shown that the below set of axioms $\mathfrak{I}$ is complete for the general implication problem of unary independence atoms and arbitrary keys taken together.

\begin{table}[h]
\[\fbox{$\begin{array}{c@{\hspace*{.25cm}}c@{\hspace*{.25cm}}c}
\cfrac{}{\emptyset\boto X} & \cfrac{X\boto Y}{Y\boto X} & \cfrac{X \boto X \quad Y \boto Z }{XY \boto Z} \\
\text{(trivial independence, $\mathcal{R}1$)} & \text{(symmetry, $\mathcal{R}2$)} & \text{(constancy, $\mathcal{R}3$)}

\\ \\

\cfrac{X\boto YZ}{X\boto Y} & \cfrac{X\boto Y\quad XY\boto Z}{X\boto YZ}  &  \cfrac{}{\key{R}}   \\
 \text{(decomposition, $\mathcal{R}4$)} & \text{(exchange, $\mathcal{R}5$)}  &  \text{(trivial key, $\mathcal{R}6$)}

\\ \\

\cfrac{\key{X}}{\key{XY}} &     \cfrac{X \boto X \quad \key{XY}}{\key{Y}} & \cfrac{X\boto Y\quad \key{X}}{Y\boto Y}  \\
\text{(upward closure, $\mathcal{R}7$)}& \text{(1st composition, $\mathcal{R}8$)} & \text{(2nd composition, $\mathcal{R}9$)}

\end{array}$}\]
\caption{Axiomatization $\mathfrak{I}$ of Independence Atoms and Keys in Database Relations\label{tab-rules}}
\end{table}

It is straightforward to check the soundness of the  axioms $\mathfrak{I}$.

\begin{theorem}
The axioms $\mathfrak{I}$ are sound for the general implication problem of independence atoms and keys.
\end{theorem}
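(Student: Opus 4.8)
The plan is to verify soundness rule by rule. Since an inference $\Sigma\vdashh\phi$ is a finite sequence each of whose members lies in $\Sigma$ or follows from earlier members by a single rule, it suffices to establish the strong local statement: whenever a relation $r$ satisfies the premises of a rule, $r$ also satisfies its conclusion. A straightforward induction on the length of the inference then yields $\Sigma\models\phi$. So the whole proof reduces to nine short checks against the semantic rules S-K and S-I.

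Several rules are immediate. For $\mathcal{R}1$, given $t,t'\in r$ the witness $t'':=t'$ works since $t''(\emptyset)=t(\emptyset)$ holds vacuously. For $\mathcal{R}2$ one swaps the roles of $t$ and $t'$ in S-I. For $\mathcal{R}4$, a witness for $X\boto YZ$ is in particular a witness for $X\boto Y$. For $\mathcal{R}6$, two tuples agreeing on all of $R$ are equal as functions on $R$. For $\mathcal{R}7$, agreement on $XY$ implies agreement on $X$, so $\key{X}$ forces equality.

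The three rules mentioning $X\boto X$ all hinge on the observation that $r\models X\boto X$ holds exactly when $X$ is constant on $r$ (i.e. $t(X)=t'(X)$ for all $t,t'\in r$), since the witness $t''$ for $t,t'$ satisfies $t''(X)=t(X)$ and $t''(X)=t'(X)$. Given this: $\mathcal{R}3$ follows because a witness for $Y\boto Z$ automatically agrees with $t$ on the constant set $X$; $\mathcal{R}8$ follows because agreement on $Y$ already forces agreement on $XY$, so $\key{XY}$ applies and gives $\key{Y}$; and for $\mathcal{R}9$, a witness $t''$ for $X\boto Y$ with $t''(X)=t(X)$ must equal $t$ by $\key{X}$, whence $t(Y)=t''(Y)=t'(Y)$, so $Y$ is constant, which is exactly $Y\boto Y$.

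The only rule requiring a genuine two-step construction is the exchange rule $\mathcal{R}5$, and this is where I expect the (still routine) main obstacle to lie. Given $t,t'\in r$, first apply $X\boto Y$ to $t,t'$ to obtain $s\in r$ with $s(X)=t(X)$ and $s(Y)=t'(Y)$; then apply $XY\boto Z$ to the pair $s,t'$ to obtain $t''\in r$ with $t''(XY)=s(XY)$ and $t''(Z)=t'(Z)$. Then $t''(X)=s(X)=t(X)$ and $t''(YZ)=t'(YZ)$, so $t''$ witnesses $X\boto YZ$. Collecting the nine cases completes the proof; the rest is bookkeeping.
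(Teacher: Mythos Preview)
Your proof is correct and follows exactly the approach the paper has in mind: the paper merely asserts that ``it is straightforward to check the soundness of the axioms $\mathfrak{I}$'' and omits the details, and you have filled in precisely those rule-by-rule verifications against the semantic clauses S-K and S-I. Each of your nine checks is accurate, including the key observation that $r\models X\boto X$ is equivalent to $X$ being constant on $r$, and your two-step witness construction for $\mathcal{R}5$.
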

Next we will show that the set of axioms $\mathfrak{I}$ is complete for the general implication problem of unary independence atoms and arbitrary keys.
\begin{theorem}\label{partial1}
Assume that $R$ is a relation schema and $\Sigma \cup\{\phi\}$ consists of $R$-keys and unary $R$-independence atoms. Then  $\Sigma \vdashh \phi$ iff $\Sigma \models \phi$.
\end{theorem}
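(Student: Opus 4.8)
The plan is to prove the non-trivial direction, completeness: assuming $\Sigma\models\phi$, I must construct an $\mathfrak I$-inference of $\phi$ from $\Sigma$. Equivalently, I will prove the contrapositive: if $\Sigma\not\vdashh\phi$, then there is a (possibly infinite) relation $r$ satisfying $\Sigma$ but not $\phi$. The first step is to understand the syntactic closure of $\Sigma$ under $\mathfrak I$. Because the independence atoms in play are unary, $\mathcal R3$ (constancy) and $\mathcal R8$/$\mathcal R9$ let me isolate the set $C=\{A : \Sigma\vdashh A\boto A\}$ of \emph{constant} attributes, and then the remaining independence structure on $R\setminus C$ should, via $\mathcal R1$, $\mathcal R2$, $\mathcal R4$, $\mathcal R5$, behave like a classical independence/marginal-independence closure — I would show that the derivable unary independencies among non-constant attributes partition (a subset of) $R\setminus C$ into independent blocks, i.e. an equivalence-like structure where $\Sigma\vdashh A\boto B$ iff $A,B$ lie in different blocks (or one is constant). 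Simultaneously I track the derivable keys: by $\mathcal R6$, $\mathcal R7$ the derivable keys form an up-set, and by $\mathcal R8$ adding a constant attribute to a key is reversible, so the "essential" keys are generated by the minimal derivable keys together with $C$.

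Next, the two cases. \textbf{Case $\phi=\key{X}$:} since $\Sigma\not\vdashh\key{X}$, and using $\mathcal R8$ one may assume $X\cap C=\emptyset$ (otherwise strip the constant part), $X$ is not a superset of any derivable key. I build a counterexample relation on the attributes: fix all attributes in $C$ to a single value, and on the non-constant blocks put a relation that is a product of copies of $\{0,1,2,\dots\}$ (one factor per independent block), which automatically satisfies every derivable independence atom; to violate $\key{X}$ while satisfying the derivable keys, I make sure the relation has two tuples agreeing on $X$ but differing elsewhere, which is possible precisely because $X$ extends no derivable key — here the minimal derivable keys tell me which coordinate-sets must stay "injective", and $X$ avoids all of them, so I can collapse $X$ onto a single value in two distinct tuples. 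One checks this $r$ satisfies all of $\Sigma$ (each $\sigma\in\Sigma$ is derivable, hence holds by the structural description) but $r\not\models\key{X}$. \textbf{Case $\phi=A\boto B$:} since $\Sigma\not\vdashh A\boto B$, either $A$ or $B$ is non-constant and $A,B$ lie in the same independent block (or are equal and non-constant). Again take the product-of-chains relation over the blocks with $C$ fixed; because $A$ and $B$ are in the same block they are realized by the \emph{same} coordinate, so no witnessing tuple $t''$ with $t''(A)=t(A)$, $t''(B)=t'(B)$ for independently chosen $t,t'$ exists, giving $r\not\models A\boto B$; and every derivable key still holds because the chains are infinite and the relation can be taken to be, e.g., the "diagonal-free" product so that no spurious key collapses. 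If instead $A=B$ is non-constant, two tuples differing only in that coordinate already witness $r\not\models A\boto A$.

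The main obstacle I anticipate is the bookkeeping in the \emph{keys-with-independence} interaction: verifying that the structural/syntactic description of $\Ci{\Sigma}$ is exactly right — in particular that rules $\mathcal R8$ (1st composition) and $\mathcal R9$ (2nd composition) do not generate keys or independencies beyond those predicted by the "constant block $C$ + minimal keys + independent-block partition" picture, and dually that the model I construct is rich enough to satisfy \emph{all} members of $\Sigma$ and not merely the independencies. A delicate point is that $\mathcal R9$ forces: if $X\boto Y$ and $\key X$ then $Y$ is constant — so in my model any attribute that is the value-determined image of a key must be fixed, and I must choose the minimal keys and the block structure consistently so this is respected; getting the interaction between "which sets are keys" and "which attributes are constant" coherent, and then realizing it by a single concrete product relation, is where the real work lies. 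The use of an infinite relation (product of countable chains) is what makes the general — as opposed to finite — implication problem tractable here, mirroring the FD/unary-IND situation, and is essential since Theorem~\ref{partial1} is false in the finite case.
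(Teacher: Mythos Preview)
Your plan has a genuine gap. The structural claim that the derivable unary independencies among non-constant attributes ``partition a subset of $R\setminus C$ into independent blocks, where $\Sigma\vdashh A\boto B$ iff $A,B$ lie in different blocks'' is simply false. For unary $\Sigma_{\rm i}$, the rules $\mathcal R1$--$\mathcal R5$ generate essentially only the symmetric closure of $\Sigma_{\rm i}$ (plus atoms involving constants): there is no transitivity-like principle that would force the independence relation to be the complement of an equivalence. Take $R=\{A,B,C\}$ and $\Sigma=\{A\boto B\}$; then neither $A\boto C$ nor $B\boto C$ is derivable, and there is no block to put $C$ in.

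More seriously, the product-of-chains model cannot simultaneously handle keys and independencies. Concretely, let $R=\{A,B,C\}$, $\Sigma=\{A\boto B,\ \key{AB}\}$, and $\phi=\key{AC}$. Here $C=\emptyset$ and $\Sigma\not\vdashh\phi$. Any relation in which the $A$-column and the $B$-column are \emph{full} independent copies of an infinite chain automatically fails $\key{AB}$ unless the $C$-column is functionally determined by $AB$; but every placement of $C$ (tie it to $A$, tie it to $B$, or make it a third independent coordinate) either validates $\key{AC}$ or violates $\key{AB}$. A ``diagonal-free product'' does not rescue this: the tension is that $\key{AB}$ forces $|r|=|r(AB)|$, while $A\boto B$ forces $|r(AB)|=|r(A)|\cdot|r(B)|$, so $C$ must be determined by $(A,B)$ yet not by $(A,C)$, and no uniform product-type construction arranges that.

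The paper does not attempt a closed-form description of the syntactic closure. Instead it runs a chase: start with two tuples $t_0,t_1$ chosen to violate $\phi$ (agreeing exactly on $D$ in the key case, or taking fresh distinct values on $XY$ in the independence case), with all constant attributes fixed to $0$; then cycle through the independence atoms $X_l\boto Y_l$ of $\Sigma$, at each stage adding the missing $(X_l,Y_l)$-combinations as new tuples whose values are \emph{fresh} on every attribute outside $R'\cup\{X_l,Y_l\}$. Freshness is what preserves every key in $\Sigma_{\rm k}$ at each stage (any violation would force the offending key's attribute set into $R'X_l$ or $R'Y_l$, contradicting non-derivability via $\mathcal R7,\mathcal R8,\mathcal R9$), and also prevents the new tuples from accidentally witnessing $t_0(X)t_1(Y)$. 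The union of the chain is the desired infinite countermodel. If you want to repair your argument, this iterative construction --- rather than a global product --- is the idea you are missing.
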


\begin{proof}
Assume to the contrary that $\Si \not\vdashh \phi$. We will construct a countably infinite relation witnessing $\Si \not\models \phi$. 
Let $\Si_{\rm i} \cup \Si_{\rm k}$ be the partition of $\Si$ to independence atoms and keys, respectively. Let  $\indepc{X_1}{Y_1}, \ldots ,\indepc{X_{N}}{Y_{N}}$ be an enumeration of $\Si_{\rm i}$, and let $A_1, \ldots ,A_M$ be an enumeration of $R$. Moreover, let $R':=\{A \in R : \Si \vdashh \indepc{A}{A}\}$. We will construct an increasing chain (with respect to $\sub$) of finite relations $r_n$, for $n \geq 0$, such that
\begin{enumerate}
\item $r_n(R') = \{\tuple 0\}$,
\item $r_n \models \Si_{\rm k}$, and $r_n \models \indepc{X_l}{Y_l}$ if $1 \leq n= l$ modulo $N$.
\end{enumerate}
Then letting $r:= \bigcup_{n \geq 0} r_n$, we obtain that $r \models \Si$. Regarding $\phi$, we also have two cases:  $\phi$ is either of the form \begin{itemize}
\item[$(i)$] $\key{D} $ or
\item[$(ii)$]$X \boto Y$.
\end{itemize}
For showing that $r\not\models \phi$, it suffices to define relations  $r_n$ so that $r_0:=\{t_0,t_1\}$ where
\begin{itemize}
\item[$3.$]  $\begin{cases}r_0 \not\models \key{D}&\textrm{ in case }(i), \\
  \textrm{ for no } t\in r_n:t(XY)=t_0(X)t_1(Y)&\textrm{ in case }(ii).
\end{cases}$
\end{itemize}
Relations $r_n$ are now constructed inductively as follows:
\begin{itemize}
\item Assume first that $n=0$. We let $r_0:=\{t_0,t_1\}$ where, for $1 \leq i \leq M$,
\begin{itemize}
\item $t_0(A_i):=$ $\begin{cases}
0 &\textrm{ if }A_i \in R'D \textrm{ in case $(i)$ or }A_i \in R'  \textrm{ in case }(ii),\\
i & \textrm{ otherwise},
\end{cases}$
\item $t_1(A_i):=$ $\begin{cases}
0 &\textrm{ if }A_i \in R'D \textrm{ in case $(i)$ or }A_i \in R'  \textrm{ in case }(ii),\\
M+i & \textrm{ otherwise}.
\end{cases}$
\end{itemize}
Then item 1 follows from the definition. For showing that $r_0 \models \Si_{\rm k}$, let $\key{B} \in \Si_{\rm k}$. Assume to the contrary that $r_0 \not\models \key{B}$. Then we have two cases:
\begin{itemize}

\item In case $(i)$, $B \sub R'D$ when we  obtain that $\Si \vdashh B \cap R' \boto B \cap R'$  using repeatedly $\mathcal{R}2$ and $\mathcal{R}3$. From this, since $\key{B} \in \Si$, we obtain $\key{B \setminus R'}$ with $\mathcal{R}8$. Since $B \setminus R' \sub D$, we then obtain $\phi$ with $\mathcal{R}7$. This again contradicts with the assumption $\Si \not\vdashh \phi$.

\item In case $(ii)$, $B \sub R'$, when we obtain that $\Si \vdashh B \boto X$ using first $\ax{1}$ and then repeatedly $\mathcal{R}3$. From this, since $\key{B}\in \Si$, we then obtain $X \boto X$ by $\mathcal{R}9$. From $X \boto X$ we  obtain $\phi$ with $\ax{1}$ and $\mathcal{R}3$ which contradicts with the assumption $\Si \not\vdashh \phi$.

\end{itemize}
Hence $r_0 \models \key{B}$ when we obtain that $r_0 \models  \Si_{\rm k}$.
For item 3, note that in case $(i)$, $r_0 \not\models \key{D}$ by the definition of $r_0$. Also in case $(ii)$ where $\phi$ is $X\boto Y$,  we must have $XY \sub R\setminus R'$, since otherwise we would obtain that $\Si \vdashh \phi$ using $\mathcal{R}3$ and $\mathcal{R}2$. Thus by the definition of $r_0$, we conclude that for no $t \in r_0:t(XY)=t_0(X)t_1(Y)$.

\item Assume then that $r_{n}$ is a finite relation satisfying the induction assumption; we will construct a finite relation $r_{n+1}$ also satisfying the induction assumption. Assume that $l = n+1$ modulo $N$. If $r_{n} \models X_l \boto Y_l$, then we let $r_{n+1}:=r_{n}$. Otherwise, let $(a_1,b_1), \ldots ,(a_k,b_k)$ be an enumeration of $r_{n}(X_l)\times r_n(Y_l) \setminus r_{n}(X_lY_l)$, and assume that $m$ is the maximal number occurring in $r_n$. We then let $r_{n+1}$ be obtained by extending $r_n$ with tuples $s_i$, for $1 \leq i \leq k$, such that $s_i(A_j)$, for $1 \leq j \leq M$, is defined as follows:
\begin{itemize}
\item $s_i(A_j)=$ $\begin{cases}
0 &\textrm{ if }A_j \in R',\\
a_i &\textrm{ if } A_j=X_l,\\
b_i &\textrm{ if } A_j =Y_l,\\
m + iM +j & \textrm{ otherwise}.
\end{cases}$

\end{itemize}
Note that $r_{n+1}$ is well defined: from the assumption $r_n \not\models X_l \boto Y_l$ and the induction assumption $r_n(R')=\{\yli{0}\}$ we obtain that
\begin{equation}\label{ei}
X_l,Y_l \not\in R'
\end{equation}
from which it also follows that $X_l$ and $Y_l$ are two distinct attributes.
Now item 1 of the claim and $r_{n+1} \models X_l\boto Y_l$ of item 2 follow from the definition. For showing that $r_{n+1} \models \Si_{\rm k}$, let $\key{B} \in \Si_{\rm k}$. Assume to the contrary that $r_{n+1} \not\models \key{B}$. Then, by the definition of $r_{n+1}$, and since $r_n \models \key{B}$ by the induction assumption, we obtain that $B \sub R'X_l$ or $B \sub R'Y_l$.  Assume first that $B \sub R'X_l$. Since $\key{B} \in \Si$, we then obtain $\key{R'X_l}$ by $\mathcal{R}7$. By the definition of $R'$, we obtain $R' \boto R'$ using repeatedly $\mathcal{R}2$ and $\ax{3}$. Then from $R' \boto R'$ and $\key{R'X_l}$, we obtain $\key{X_l}$ by $\ax{8}$. From this and $X_l \boto Y_l$ we would then, by $\ax{9}$, obtain $Y_l \boto Y_l$ when $Y_l\in R'$ contradicting with \eqref{ei}. The case where $B \sub R'Y_l$ is analogous. Therefore the counter-assumption $r_{n+1} \not\models \key{B}$ is false, and hence $r_{n+1} \models \Si_{\rm k}$.

For item 3 of the claim, assume that $\phi$ is $X \boto Y$. Assume to the contrary that for some $t \in r_{n+1}\setminus r_n:t(XY)=t_0(X)t_1(Y)$. First recall that $XY \sub R\setminus R'$ because $\Si \not\vdashh \phi$ when by the definition of $r_{n+1}$, we obtain that $XY \sub X_lY_l$. Moreover, by the assumption and  the definition of $t_0$ and $t_1$, it follows that $X$ and $Y$ are two distinct attributes. Hence $X \boto Y$  is either $X_l \boto Y_l$ or $Y_l\boto X_l$. Since $X_l \boto Y_l \in \Si$, we then,  by $\ax{2}$, obtain that $\Si\vdashh \phi$ which contradicts with the assumption. Hence item 3 of the induction assumption also holds. This concludes the construction of the relations $r_{n}$.

\end{itemize}
By the above construction, taking $r:= \bigcup_{n \geq 0} r_n$, we obtain that $r \models \Si$ and $r \not\models \phi$. This concludes the proof of Theorem \ref{partial1}.\qed
\end{proof}


\section{The Finite  Implication Problem}\label{sect: implication}

In Subsect. \ref{separation} we will show that the general and the finite implication do not coincide for keys and unary independence atoms. Using these results, we will show in Subsect. \ref{noaxioms} that for no $k$, there exists a $k$-ary axiomatization of the corresponding finite implication problem.


\subsection{Separation of the Finite and the  General Implication Problems}\label{separation}
For $n \geq 2$, let $R_n:= \{A_i,B_i : 1 \leq i \leq n\}$ be a relation schema, and let $\Si_n:= \{\indepc{A_i}{B_i} : 1 \leq i \leq n\} \cup \{\key{B_iA_{i+1}} : 1 \leq i \leq n, i $ modulo $ n\}$.\footnote{$\Si_n$ forms a smiley face of $n-1$ eyes. For instance, $\Si_7$ is illustrated in Figure \ref{kuva0} where each pair of attributes connected by an edge represents a key of $\Si_7$.} In this subsection we will show in Lemma \ref{finimpl} and \ref{genlem} that $\Si_n \models_{\rm FIN} \key{A_1B_1}$, for $n \geq 2$, and $\Si_2 \not\models \key{A_1B_1}$. Hence we will obtain the main result of this subsection.
\begin{figure}[h]
\center
\begin{tikzpicture}[xscale=.9,yscale=1.3 ]

\draw [-] [ ,  thick, shorten >=10pt, shorten <=10pt]  (1,0) to [out=80,in=110] (2,0);
\draw [-] [ ,  thick, shorten >=10pt, shorten <=10pt]  (3,0) to [out=80,in=110] (4,0);
\draw [-] [ ,  thick, shorten >=10pt, shorten <=10pt]  (5,0) to [out=80,in=110] (6,0); \draw [-] [ ,  thick, shorten >=10pt, shorten <=10pt]  (7,0) to [out=80,in=110] (8,0);
\draw [-] [ ,  thick, shorten >=10pt, shorten <=10pt]  (9,0) to [out=80,in=110] (10,0);
\draw [-] [ ,  thick, shorten >=10pt, shorten <=10pt]  (11,0) to [out=80,in=110] (12,0);
\draw [-] [ ,  thick, shorten >=12pt, shorten <=12pt]  (-.35,-.1) to [out=345,in=195] (13.35,-.1);

\node (a) at (0.5,0) {$\bot$};
\node (a) at (2.5,0) {$\bot$};
\node (a) at (4.5,0) {$\bot$};
\node (a) at (6.5,0) {$\bot$};
\node (a) at (8.5,0) {$\bot$};
\node (a) at (10.5,0) {$\bot$};
\node (a) at (12.5,0) {$\bot$};

\node (a) at (0.5,0) {$\bot$};
\draw (0,0)  node[minimum size=.5cm,draw] {$A_1$};
\draw (1,0)  node[minimum size=.5cm,draw] {$B_1$};
\draw (2,0)  node[minimum size=.5cm,draw] {$A_2$};
\draw (3,0)  node[minimum size=.5cm,draw] {$B_2$};
\draw (4,0)  node[minimum size=.5cm,draw] {$A_3$};
\draw (5,0)  node[minimum size=.5cm,draw] {$B_3$};
\draw (6,0)  node[minimum size=.5cm,draw] {$A_4$};
\draw (7,0)  node[minimum size=.5cm,draw] {$B_4$};
\draw (8,0)  node[minimum size=.5cm,draw] {$A_5$};
\draw (9,0)  node[minimum size=.5cm,draw] {$B_5$};
\draw (10,0)  node[minimum size=.5cm,draw] {$A_6$};
\draw (11,0)  node[minimum size=.5cm,draw] {$B_6$};
\draw (12,0)  node[minimum size=.5cm,draw] {$A_7$};
\draw (13,0)  node[minimum size=.5cm,draw] {$B_7$};

\end{tikzpicture}
\caption{$\Si_7$} \label{kuva0}
\end{figure}

\begin{lemma}\label{finimpl}
For $n \geq 2$, $\Si_n \models_{\rm FIN} \key{A_1B_1}$.
\end{lemma}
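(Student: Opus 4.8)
The plan is to prove the lemma by a finite counting argument. Let $r$ be an arbitrary finite relation over $R_n$ with $r \models \Si_n$; the goal is to show $r \models \key{A_1B_1}$. Since the empty relation satisfies every key, one may assume $r \neq \emptyset$ and set $N := |r|$, $a_i := |r(A_i)|$ and $b_i := |r(B_i)|$ for $1 \leq i \leq n$, all positive integers.

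First I would read off two families of cardinality inequalities from the constraints in $\Si_n$. From $r \models A_i \boto B_i$ the semantic rule (S-I) gives $r(A_iB_i) = r(A_i)\times r(B_i)$, hence $|r(A_iB_i)| = a_ib_i$; since $r(A_iB_i)$ is the image of $r$ under the projection to $A_iB_i$, this yields $a_ib_i \leq N$ for every $i$. From $r \models \key{B_iA_{i+1}}$ (indices modulo $n$) the map $t \mapsto t(B_iA_{i+1})$ is injective on $r$, so $N = |r(B_iA_{i+1})| \leq |r(B_i)|\cdot|r(A_{i+1})| = b_ia_{i+1}$ for every $i$.

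The key step is to multiply the second family of inequalities around the cycle: $N^n \leq \prod_{i=1}^n b_ia_{i+1} = \bigl(\prod_{i=1}^n a_i\bigr)\bigl(\prod_{i=1}^n b_i\bigr) = \prod_{i=1}^n a_ib_i \leq N^n$, where the middle equality holds because $i \mapsto i+1$ permutes $\{1,\ldots,n\}$ modulo $n$, and the last inequality is the first family applied termwise. Hence every inequality above is an equality; in particular $a_1b_1 = N$. Finally, $r \models A_1 \boto B_1$ gives $|r(A_1B_1)| = a_1b_1 = N = |r|$, and a surjection from a finite set onto a set of the same cardinality is a bijection, so $t \mapsto t(A_1B_1)$ is injective on $r$, that is, $r \models \key{A_1B_1}$.

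I do not expect a genuine obstacle here: the whole argument rests on the single observation that the cyclic arrangement of the keys $\key{B_iA_{i+1}}$ makes the $a_i$'s and $b_i$'s cancel in the telescoping product, which in turn forces the independence atoms to pin $N$ down exactly. It is precisely this cancellation that breaks down once infinite relations are admitted, which is the phenomenon that Lemma~\ref{genlem} will then exploit for $n = 2$.
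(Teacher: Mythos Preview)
Your proof is correct. Both the paper and you run a finite counting argument, but the organization differs. The paper argues stepwise: from $r\models\indepc{A_i}{B_i}$ each $A_i$-value occurs at least $|r(B_i)|$ times, and $r\models\key{B_{i-1}A_i}$ then forces $|r(B_{i-1})|\geq |r(B_i)|$; chaining these gives $|r(B_n)|\leq |r(B_1)|$, whence $|r|\leq |r(B_n)|\cdot|r(A_1)|\leq |r(B_1)|\cdot|r(A_1)|=|r(A_1B_1)|$. You instead multiply all $n$ key inequalities $N\leq b_ia_{i+1}$ around the cycle and bound the product by $\prod a_ib_i\leq N^n$ using the independence atoms, forcing every inequality to equality. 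Your version is a touch more symmetric and yields more information for free (in particular $a_ib_i=N$ for \emph{every} $i$, not just $i=1$), which resonates with the structural constraints exploited later in Lemma~\ref{mainlemma}; the paper's version is slightly more hands-on but needs only one key inequality rather than the full product. Either way the core is the same cyclic cancellation you identify.
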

\begin{proof}
Let $n\geq 2$, and let $r$ be a finite relation over $R_n$ such that $r  \models \Si_n$. We show that $r \models \key{A_1B_1}$. First note that since $r \models \key{B_nA_1}$, we obtain that
\begin{equation}\label{equ1}
|r|= |r(B_nA_1)| \leq |r(B_n)|\cdot |r(A_1)|.
\end{equation}
Let then $2 \leq i \leq n$, and assume that $|r(B_i)|= m$. Then since $r \models \indepc{A_i}{B_i}$, each member of $r(A_i)$ has at least $m$ repetitions in $r$, that is, $|r(A_i = b)| \geq m$ for each $b \in r(A_i)$. Since $r \models \key{B_{i-1}A_i}$, we hence obtain that $|r(B_{i-1})| \geq m$ when $|r(B_i)| \leq |r(B_{i-1})|$. Therefore we conclude that $|r(B_n)| \leq |r(B_1)|$ when $|r| \leq |r(B_1)|\cdot |r(A_1)|$ by \eqref{equ1}. But now since $r \models \indepc{A_1}{B_1}$, we obtain that $|r(B_1)|\cdot |r(A_1)| = |r(B_1A_1)|$ from which the claim follows. \qed

\end{proof}
The following lemma can be proved by constructing a counter example for $\Si_2 \models \key{A_1B_1}$, similar to the one presented in the proof of Theorem \ref{partial1}.

\begin{lemma}\label{genlem}
$\Si_2 \not\models \key{A_1B_1}$.
\end{lemma}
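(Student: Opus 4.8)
The plan is to build, for the schema $R_2 = \{A_1, B_1, A_2, B_2\}$, an infinite relation $r$ satisfying $\Sigma_2 = \{A_1 \boto B_1, A_2 \boto B_2, \key{B_1 A_2}, \key{B_2 A_1}\}$ yet violating $\key{A_1 B_1}$, thereby witnessing $\Sigma_2 \not\models \key{A_1 B_1}$. The obstruction to finite implication identified in Lemma~\ref{finimpl} is purely a counting argument: the two independence atoms force $|r|$ to be a product of column sizes while the two keys chain these products back down. In the infinite setting this counting collapses, so the construction should be possible; the work is in making the independence atoms and keys hold simultaneously on an explicit infinite set of tuples.

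First I would isolate the weakest way to violate $\key{A_1 B_1}$: fix two tuples $t_0, t_1$ that agree on $A_1 B_1$ but differ elsewhere, exactly as in the proof of Theorem~\ref{partial1} (case $(i)$ with $D = A_1 B_1$). Concretely one can start with $r_0 = \{t_0, t_1\}$ where $t_0(A_1) = t_1(A_1) = 0$, $t_0(B_1) = t_1(B_1) = 0$, and $t_0, t_1$ take distinct fresh values on $A_2$ and on $B_2$. Then I would run the same chase-style fixpoint construction as in Theorem~\ref{partial1}: repeatedly enumerate the two independence atoms $A_1 \boto B_1$ and $A_2 \boto B_2$, and whenever a required witness tuple is missing for a pair $(t, t')$, add a new tuple $s$ that copies $t$ on the left attribute of the atom, copies $t'$ on the right attribute, puts $0$ on every attribute $C$ with $\Sigma_2 \vdashh C \boto C$ (here that set is empty, since $\Sigma_2 \not\vdashh A_i \boto A_i$ — one should check this, but it is immediate as no rule produces a constancy atom from $\Sigma_2$), and puts pairwise fresh, never-before-used values on the remaining attributes. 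Taking $r = \bigcup_n r_n$ yields $r \models A_1 \boto B_1$ and $r \models A_2 \boto B_2$ by construction.

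The heart of the argument is verifying $r \models \key{B_1 A_2}$ and $r \models \key{B_2 A_1}$ throughout the construction. This is the same point as in Theorem~\ref{partial1}: one shows by induction that no newly added tuple can collide with an existing tuple on either key. A collision on $\key{B_1 A_2}$ would force the colliding tuples to agree on $B_1 A_2$; but by the freshness discipline the only attributes on which distinct tuples can share values are those in $R' \cup \{X_l, Y_l\}$ for the atom $X_l \boto Y_l$ currently being processed, i.e. $R' \cup \{A_1, B_1\}$ or $R' \cup \{A_2, B_2\}$ with $R' = \emptyset$. Since $B_1 A_2$ is not contained in $\{A_1, B_1\}$ nor in $\{A_2, B_2\}$, two distinct tuples cannot agree on all of $B_1 A_2$; likewise $B_2 A_1$ is contained in neither $\{A_1,B_1\}$ nor $\{A_2,B_2\}$, so $\key{B_2 A_1}$ is preserved as well. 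Finally, since $t_0, t_1 \in r$ agree on $A_1 B_1$ but $t_0 \neq t_1$, and the fresh-value discipline guarantees no later tuple is ever identified with $t_0$ or $t_1$, we get $r \not\models \key{A_1 B_1}$, completing the proof.

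The main obstacle is the bookkeeping in the key-preservation step: one must be careful that the attribute sets of the two keys of $\Sigma_2$ genuinely straddle the two independence atoms (each key uses one attribute from the $i$-th pair and one from the $(i{+}1)$-st pair), which is precisely the structural reason the counterexample exists for $n = 2$ but not finitely. I would emphasize this combinatorial check rather than the routine fixpoint machinery, and I would remark that the construction is essentially a specialization of the proof of Theorem~\ref{partial1} with $\phi = \key{A_1 B_1}$ and $\Sigma = \Sigma_2$, so most of it can be cited rather than repeated.
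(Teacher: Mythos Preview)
Your proposal is correct and matches the paper's approach: the paper also builds a countably infinite counterexample by alternately repairing the two independence atoms with tuples carrying fresh values on the remaining attributes, and it explicitly remarks (just before the lemma) that the construction is ``similar to the one presented in the proof of Theorem~\ref{partial1}.'' Your observation that one can simply invoke Theorem~\ref{partial1} after checking $\Sigma_2 \not\vdashh \key{A_1B_1}$ (which follows since $R'=\emptyset$ and hence the only derivable keys are supersets of $B_1A_2$ or $B_2A_1$) is a clean shortcut the paper leaves implicit.
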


\begin{proof}
We will construct a countably infinite relation $r$ over $R_2$ witnessing $\Si\not\models \phi$. For this we will inductively define an increasing chain (with respect to $\sub$) of finite relations $r_n$ over $R_2$ such that $r_1 \not\models \key{A_1B_1}$ and, for $n \geq 1$,
\begin{enumerate}
\item $r_n \models$ $\begin{cases} \key{B_2A_1},\\
   \key{B_1A_2},\end{cases}$
\item $r_n \models$ $\begin{cases}
\indepc{A_1}{B_1} \textrm{ if $n$ is odd},\\
 \indepc{A_2}{B_2}\textrm{ if $n$ is even}.
\end{cases}$
\end{enumerate}
Then, letting $r:= \bigcup_{n \geq 1} r_n$, we obtain that $r \models \Si$ and $r \not\models \phi$. The construction of relations $r_n$ is done as follows:

\begin{itemize}
\item For $n=1$, we let $r_1(A_1B_1A_2B_2):=\{(0,0,1,2),(0,0,3,4)\}$. Then $r_1 \models \key{B_2A_1}$, $r_1 \models  \key{B_1A_2}$ and $r_1 \models \indepc{A_1}{B_1}$.

\item Assume that  $r_n(A_1B_1A_2B_2)$ is a finite relation satisfying the induction assumption; we will construct a finite relation $r_{n+1}$ also satisfying the induction assumption. Without loss of generality we may assume that $n+1$ is even. Let $m$ be the maximal number occurring in $r_n$, and let $(a_1,b_1), \ldots ,(a_k,b_k)$ enumerate the set $(r_n(A_2) \times r_n(B_2)) \setminus r_n(A_2B_2)$. Note that this set is non-empty because otherwise, by the induction assumption, we would obtain a finite relation $r$ witnessing $\Si_2\not\models \key{A_1B_1}$, contrary to Lemma \ref{finimpl}. We then let
$$r_{n+1}(A_1B_1A_2B_2):= r_n \cup \{(a_i,b_i, m +2i-1,m +2i) : 1 \leq i \leq k\}.$$
By the construction and the induction assumption, it is straightforward to check that items 1 and 2 hold. This concludes the construction and the proof.
\end{itemize}\qed
\end{proof}

Hence,  from Lemma \ref{finimpl} and \ref{genlem}, we directly obtain the following corollary.
\begin{corollary}
For keys and unary independence atoms taken together, the finite implication problem and the general implication problem do not coincide.
\end{corollary}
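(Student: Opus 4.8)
The plan is to obtain the Corollary immediately by combining Lemmas~\ref{finimpl} and~\ref{genlem}; no additional argument is required. Recall first that $\Sigma \models \varphi$ always entails $\Sigma \models_{\rm FIN} \varphi$ (a relation satisfying $\Sigma$ need only be scrutinised among the finite ones), so the two implication problems fail to coincide exactly when there is some schema $R$ and some set $\Sigma \cup \{\varphi\}$ of $R$-constraints with $\Sigma \models_{\rm FIN} \varphi$ but $\Sigma \not\models \varphi$. Hence it suffices to produce one such witness.

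I would take $R := R_2$ and $\Sigma := \Si_2 = \{\indepc{A_1}{B_1}, \indepc{A_2}{B_2}, \key{B_1 A_2}, \key{B_2 A_1}\}$, with $\varphi := \key{A_1 B_1}$. Instantiating Lemma~\ref{finimpl} at $n = 2$ (which is allowed, since the lemma covers all $n \geq 2$) gives $\Si_2 \models_{\rm FIN} \key{A_1 B_1}$, so the finite implication problem answers ``Yes'' on this input. On the other hand, Lemma~\ref{genlem} states $\Si_2 \not\models \key{A_1 B_1}$, so the general implication problem answers ``No''. Therefore the two problems disagree on this instance, which proves the Corollary.

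All the real work lies in the two lemmas, so for the Corollary there is essentially no obstacle. The substantive steps were already carried out: Lemma~\ref{genlem} builds a countably infinite relation over $R_2$ satisfying $\Si_2$ while violating $\key{A_1 B_1}$, by iterating the ``add a missing witness pair'' extension used in the proof of Theorem~\ref{partial1}; and Lemma~\ref{finimpl} propagates the inequality $|r| = |r(B_2 A_1)| \leq |r(B_2)| \cdot |r(A_1)|$ around the cycle of keys and independence atoms, arriving at $|r| \leq |r(B_1)| \cdot |r(A_1)| = |r(B_1 A_1)|$. The only point to keep in mind when assembling the Corollary is that ``do not coincide'' is witnessed by a single differing instance, and that the case $n = 2$ needed here is covered by Lemma~\ref{finimpl}.
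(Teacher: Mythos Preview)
Your proposal is correct and follows exactly the paper's approach: the corollary is obtained directly by combining Lemma~\ref{finimpl} (instantiated at $n=2$) with Lemma~\ref{genlem}, yielding an instance on which the finite and general implication problems disagree. The additional remarks you include about why a single differing instance suffices are accurate and slightly more explicit than the paper, but the argument is the same.
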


\subsection{Non-axiomatizability of the Finite Implication}\label{noaxioms}
In this subsection we will show that for no $k$ there exists a $k$-ary axiomatization of the finite implication problem for unary independence atoms and keys taken together. For this, we first define, for $n \geq 2$, an upward closure of $\Si_n$ with respect to keys as follows:
$$\Cu{\Si_n}:= \Si_n \cup \{\key{D}: C  \sub D \sub R_n, \key{C} \in \Si_n\}.$$
Then we will show that $\Cu{\Si_n}$ is closed under $2n-1$-ary finite implication. Hence, and since $\key{A_1B_1} \not\in \Cu{\Si_n}$, it follows that the rule

$$
\cfrac{\Si_n}{\key{A_1B_1}}
 $$
for finite relations, is irreducible. That is, we cannot hope to deduce $\key{A_1B_1}$ from $\Si_n$ with a set of sound $2n-1$-ary rules.

Next we will show that $\Cu{\Si_n}$ is closed under $2n-1$-ary finite implication. For this, since $\Cu{\Si_n}$ is the closure of $\Si_n$ under the unary rule $\ax{7}$,
it suffices to show that all $2n-1$-ary finite implications of $\Si_n$ are included in  $\Cu{\Si_n}$. Namely, we will prove the following theorem.

\begin{theorem}\label{closed}
Let $n \geq 2$, $\Si' := \Si_n \setminus\{\psi\}$ where $\psi \in \Si_n$, and let $\phi$ be a $R_n$-key or a unary $R_n$-independence atom such that $\Si' \models_{\rm FIN} \phi$. Then $\phi \in \Cu{\Si_n}$.
\end{theorem}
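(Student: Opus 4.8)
The plan is to show that if we remove any single constraint $\psi$ from $\Si_n$, the remaining set $\Si'$ is so weak that every finite implication of it is already a ``trivial'' one, i.e.\ lies in $\Cu{\Si_n}$. There are two cases for $\psi$: either $\psi$ is an independence atom $\indepc{A_i}{B_i}$, or $\psi$ is a key $\key{B_iA_{i+1}}$. In each case I would construct, for a given $\phi \notin \Cu{\Si_n}$, a finite relation $r$ with $r \models \Si'$ but $r \not\models \phi$, thereby witnessing $\Si' \not\models_{\rm FIN} \phi$. The point is that the ``smiley face'' $\Si_n$ is a cycle of length $n$ of alternating keys and independence atoms (as in Figure~\ref{kuva0}), and the argument in Lemma~\ref{finimpl} that forces $\key{A_1B_1}$ crucially uses the \emph{entire} cycle: the inequality chain $|r(B_n)| \le \dots \le |r(B_1)|$ together with $|r| = |r(B_nA_1)| \le |r(B_n)|\cdot|r(A_1)|$ and $r \models \indepc{A_1}{B_1}$. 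Breaking any one link of the cycle should let the relevant cardinalities grow, so that a suitable finite counterexample exists.

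Concretely, I would first analyze which $\phi$ fail to be in $\Cu{\Si_n}$. For independence atoms $\phi = \indepc{C}{D}$ with $C,D$ single attributes: the only nontrivial ones derivable from $\Si_n$ at all are (via $\mathcal R2$) exactly $\indepc{A_i}{B_i}$ and $\indepc{B_i}{A_i}$, plus the trivial $\emptyset\boto X$; so if $\phi$ is a unary IA not of this form then even $\Si_n \not\models_{\rm FIN} \phi$, and a fortiori $\Si'$ does not imply it — here I can reuse a variant of the infinite-chain construction from Theorem~\ref{partial1} truncated to a finite stage, or more simply exhibit an explicit small relation. The genuinely interesting case is $\phi$ a key $\key{D}$ with $\key{D} \notin \Cu{\Si_n}$, equivalently $D$ does not contain any $\{B_i,A_{i+1}\}$. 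Then I must build a finite $r \models \Si'$ with two distinct tuples agreeing on $D$.

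For the key case, the construction I have in mind is a ``product of cycles'' style relation: index tuples by tuples $(x_1,\dots,x_n,y_1,\dots,y_n)$ from a finite abelian group (say $\Z_q$ for large $q$) subject to one linear constraint that encodes the broken link. If $\psi = \indepc{A_i}{B_i}$ is removed, I want $r$ where $A_j$ and $B_j$ are independent for all $j\neq i$ (so take $r$ to be a full Cartesian-product-like set in those coordinates) but where $A_i$ is functionally determined, so the chain inequality fails at position $i$; one checks all remaining keys $\key{B_jA_{j+1}}$ hold because the pair still identifies the tuple. If $\psi = \key{B_iA_{i+1}}$ is removed, I instead want $B_i$ and $A_{i+1}$ \emph{not} to identify tuples — duplicate on $\{B_i,A_{i+1}\}$ — while keeping every $\indepc{A_j}{B_j}$ and every other key; a relation built from $(\Z_q)^{n}$ with tuples $(a_1,b_1,\dots)$ where one coordinate is duplicated across a fiber does the job. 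In each subcase I then verify (i) $r$ is finite, (ii) $r\models \Si'$, and (iii) $r\not\models\phi$, the last because $D$ misses all $\{B_j,A_{j+1}\}$ so one can place the duplication inside $D$'s ``blind spot.''

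The main obstacle I expect is bookkeeping in the key case: making one relation that \emph{simultaneously} satisfies all $n-1$ surviving keys and all surviving independence atoms while violating the target $\key{D}$ requires choosing the algebraic constraint carefully so that no unintended key is forced, and one must handle the modular wraparound $i+1 \bmod n$ cleanly (the case $i=n$ versus $i<n$). A secondary subtlety is that $\phi$ may be a key $\key{D}$ with $D$ quite large — possibly all of $R_n$ minus one attribute — so the counterexample relation must be genuinely two-dimensional in the right coordinate; ruling out that $\Si'$ already forces such a large key (which would happen only if $D \supseteq \{B_j,A_{j+1}\}$ for some $j$, i.e.\ $\key D \in \Cu{\Si_n}$) is exactly the hypothesis we are given, so the combinatorial content is confined to showing the chain argument of Lemma~\ref{finimpl} has no shortcut avoiding the deleted link. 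Once the counterexamples are in hand, the theorem follows immediately.
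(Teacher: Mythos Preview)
Your high-level decomposition matches the paper's: split into four cases according to whether $\psi$ and $\phi$ are keys or independence atoms, and in each case build a finite relation $r$ with $r\models\Si'$ and $r\not\models\phi$. The paper does exactly this in four lemmas. Your treatment of the case where $\phi$ is an independence atom is in the right spirit and close to what the paper does (it exhibits explicit relations of two to four tuples), though your shortcut of showing $\Si_n\not\models_{\rm FIN}\phi$ rather than $\Si'\not\models_{\rm FIN}\phi$ is valid but not actually easier, since finite models of the full $\Si_n$ are more constrained.

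The genuine gap is in the hard case: $\psi$ a key and $\phi=\key{D}$ with $D$ containing some pair $A_iB_i$. Your proposed ``product of cycles / $\Z_q$ with one linear constraint'' construction cannot work as described. Here is why. Suppose $r\models\Si'=\Si_n\setminus\{\key{B_nA_1}\}$. From $\indepc{A_j}{B_j}$ and $\key{B_jA_{j+1}}$ (for $j<n$) one gets
\[
|r(A_j)|\cdot|r(B_j)|=|r(A_jB_j)|\le |r| = |r(B_jA_{j+1})|\le |r(B_j)|\cdot|r(A_{j+1})|,
\]
so $|r(A_j)|\le|r(A_{j+1})|$ and dually $|r(B_{j+1})|\le|r(B_j)|$. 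If moreover $r\not\models\key{A_iB_i}$ (which you need whenever $A_iB_i\sub D$), then two distinct tuples agree on $A_iB_i$, so some value of $A_i$ occurs at least $|r(B_i)|+1$ times; by $\key{B_{i-1}A_i}$ this forces $|r(B_{i-1})|\ge|r(B_i)|+1$, a \emph{strict} drop. Hence the column sizes cannot be uniform: they must strictly decrease (in the $B$-columns) precisely at every index $i$ with $A_iB_i\sub D$. A relation carved out of $(\Z_q)^n$ by a single linear constraint has uniform marginals and will automatically satisfy $\key{A_jB_j}$ for every $j$, so it cannot witness failure of any $\key{D}$ with $A_jB_j\sub D$.

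The paper's construction addresses exactly this obstruction. It lets $m$ be the number of indices with $A_iB_i\sub D$, sets $M=(m+3)!$ to guarantee enough divisibility, and defines $a_i=|r(A_i)|$, $b_i=|r(B_i)|$ recursively so that $a_ib_i=M$ when $A_iB_i\not\sub D$ but $(a_i+1)b_i=M$ when $A_iB_i\sub D$, and always $b_ia_{i+1}=M$. This forces the required strict drops. The relation is then built column by column, with a delicate step (introducing an extra value and re-indexing by pairs) at each index where $A_{i}B_{i}\sub D$. Your proposal does not contain this idea, and the obstacle paragraph (``bookkeeping'') understates the difficulty: it is not bookkeeping but a genuine arithmetic design problem that your abelian-group sketch does not solve.
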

This will be done in  Lemma \ref{mainlemma}, \ref{apu2}, \ref{apu3} and \ref{apu4} where in each case we consider one of the four different scenarios.

In the first case, Lemma \ref{mainlemma}, $\psi$ and $\phi$ are both keys (without loss of generality $\psi = \key{B_nA_1}$). In the proof of the lemma, we assume that $\phi\not\in \Cu{\Si_n}$ and show that $\Si' \not\models_{\rm FIN} \phi$ by constructing a finite relation $r$ such that $r \models \Si'$ and $r\not\models \phi$. Note that then, as in the proof of Theorem \ref{finimpl}, $|r(B_i)| \leq |r(B_{i-1})|$ and $|r(A_{i-1})| \leq |r(A_{i})|$, for $2 \leq i \leq n$. Hence we must construct $r$ so that $|r(A_1)|, |r(A_2)|, \ldots $ is increasing and $|r(B_1)|, |r(B_2)|, \ldots $ is decreasing. Moreover, if  $\phi = \key{D}$ for some $D \sub R_n$, and $A_iB_i \sub D$ for some $2 \leq i \leq n$, then we must have $|r(B_i)|+1\leq |r(B_{i-1})|$. Since, from $r \not\models \key{A_iB_i}$ it follows that there are two distinct $t,t'\in r$ with $t(A_iB_i)=t'(A_iB_i)$. Because $r\models \indepc{A_i}{B_i}$, then $t(A_i)$ must have at least $|r(B_i)|+1$ occurrences in column $A_i$ of $r$. Hence, by $r \models \key{B_{i-1}A_{i}}$, $r(B_{i-1})$ is at least of size $|r(B_i)|+1$. Analogously, if $A_iB_i \sub D$ for some $1 \leq i \leq n-1$, then we obtain that $|r(A_{i})|+1\leq |r(A_{i+1})|$.

\begin{lemma}\label{mainlemma}
Let $n \geq 2$, $\Si' := \Si_n \setminus\{\psi\}$ where $\psi \in \Si_n$ is a key, and let $\phi$ be a $R_n$-key such  that $\Si' \models_{\rm FIN} \phi$. Then $\phi \in \Cu{\Si_n}$.
\end{lemma}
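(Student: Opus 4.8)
The plan is to argue by contraposition: assuming $\phi=\key D\notin\Cu{\Si_n}$, I will build a finite relation $r$ with $r\models\Si'$ and $r\not\models\key D$. First normalize: $\Si_n$ admits the cyclic automorphism $A_i\mapsto A_{i+1}$, $B_i\mapsto B_{i+1}$ (indices modulo $n$), which permutes its members and fixes $\Cu{\Si_n}$; applying it we may take $\psi=\key{B_nA_1}$, so that $\Si'=\{A_i\boto B_i:1\le i\le n\}\cup\{\key{B_iA_{i+1}}:1\le i\le n-1\}$. Since the only keys of $\Si_n$ are the $\key{B_iA_{i+1}}$, the hypothesis $\key D\notin\Cu{\Si_n}$ says exactly that $\{B_i,A_{i+1}\}\not\sub D$ for every $i$ modulo $n$. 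Writing $S:=\{i:A_iB_i\sub D\}$, this forces $S$ to be an independent set of the $n$-cycle (if $i,i+1\in S$ then $B_i,A_{i+1}\in D$), and also $D\ne R_n$.

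Next I turn the structure of a hypothetical finite counterexample into numerical data, following the remarks preceding the lemma. In any $r\models\Si'$ one has $|r(A_1)|\le\dots\le|r(A_n)|$ and $|r(B_1)|\ge\dots\ge|r(B_n)|$; moreover $|r(A_i)|\,|r(B_i)|=|r(A_iB_i)|\le|r|$ from $A_i\boto B_i$, with strict inequality when $i\in S$ since then $\key{A_iB_i}$ must fail; and $|r|=|r(B_iA_{i+1})|\le|r(B_i)|\,|r(A_{i+1})|$ for $i\le n-1$ from $\key{B_iA_{i+1}}$. I will fix target cardinalities $\alpha_i:=|r(A_i)|$, $\beta_i:=|r(B_i)|$ (weakly monotone as above) and a size $N:=|r|$ with $\alpha_i\beta_i<N$ for $i\in S$, $\alpha_i\beta_i\le N$ for $i\notin S$, and $N\le\beta_i\alpha_{i+1}$ for $i\le n-1$; these constraints already entail the strict jumps $\alpha_{i+1}>\alpha_i$ ($i\in S$, $i\le n-1$) and $\beta_{i-1}>\beta_i$ ($i\in S$, $i\ge2$). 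The independence of $S$ in the cycle is used precisely in showing such $\alpha,\beta,N$ exist: the strict jumps demanded around the $S$-positions never collide (consecutive $S$-positions are at distance $\ge2$), so one can choose the $\alpha_i$ and $\beta_i$ — generously, not minimally — so that all upper bounds coming from the keys still dominate all the strict lower bounds coming from the failing independence atoms.

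With the cardinalities in hand I construct $r$ explicitly: identify its tuples with $\{0,1,\dots,N-1\}$ and define each attribute as a suitable function of the index so that (a) $(A_i,B_i)$ sweeps out the whole product $\{0,\dots,\alpha_i-1\}\times\{0,\dots,\beta_i-1\}$, yielding $r\models A_i\boto B_i$; (b) $(B_i,A_{i+1})$ is injective for $i\le n-1$, yielding $r\models\key{B_iA_{i+1}}$ — this is where $N\le\beta_i\alpha_{i+1}$ enters; and (c) the indices $0$ and $1$ agree on every attribute of $D$ but differ on some attribute of $R_n\setminus D$ (nonempty as $D\ne R_n$), yielding $r\not\models\key D$. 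Concretely one arranges a block of the index set of size $\ge2$ on which all coordinates in $D$ are constant while the remaining coordinates still realise (a) and (b); the cases $n=2,3,4$ can be written out by hand and exhibit the general pattern. Verifying (a)–(c) is then a finite check, and the existence of $r$ contradicts $\Si'\models_{\rm FIN}\phi$; hence $\phi\in\Cu{\Si_n}$.

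I expect the main obstacle to be exactly this simultaneous balancing: choosing $\alpha_i,\beta_i,N$ so that the key upper bounds $N\le\beta_i\alpha_{i+1}$ coexist with the strict lower bounds $N>\alpha_i\beta_i$ for $i\in S$, and then realising those cardinalities by an explicit finite relation in which one pair of tuples agrees on all of $D$ simultaneously (in particular on $(A_j,B_j)$ for every $j\in S$ at once) while no two tuples ever agree on a key-pair $\{B_i,A_{i+1}\}$. The combinatorial fact driving the whole construction is that $S$ has no two cyclically consecutive indices.
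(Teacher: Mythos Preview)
Your plan follows the paper's strategy: contrapose, normalize to $\psi=\key{B_nA_1}$, fix target column sizes, and build a finite relation with that profile in which two tuples collide on $D$. But the heart of the proof is the explicit construction realizing (a)--(c) simultaneously, and your proposal leaves precisely this open; ``the cases $n=2,3,4$ can be written out by hand and exhibit the general pattern'' is not a proof, and you yourself flag the simultaneous balancing as the main obstacle without resolving it. The paper fills this gap concretely: it takes $N=(m+3)!$ with $m=|S|$, sets $\alpha_1=2$, and defines $\beta_i,\alpha_{i+1}$ recursively so that your inequalities become \emph{equalities} ($N=\beta_i\alpha_{i+1}$ always; $N=\alpha_i\beta_i$ for $i\notin S$ and $N=(\alpha_i+1)\beta_i$ for $i\in S$). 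The relation is then built column by column, tracking two distinguished tuples $t_0,t_1$ that agree on $D$; the step $i\to i+1$ when $i+1\in S$ requires a nontrivial re-enumeration to force $t_0(A_{i+1}B_{i+1})=t_1(A_{i+1}B_{i+1})=00$ while keeping the independence, and the very first column uses a placeholder symbol $\ast$ (merged with $0$ at the end, harmless since $\key{B_nA_1}\notin\Sigma'$) when $1\in S$.

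One correction: the cycle-independence of $S$ is not ``the combinatorial fact driving the whole construction.'' The strict jumps $\alpha_{i+1}>\alpha_i$ and $\beta_{i-1}>\beta_i$ are monotone constraints on separate sequences and never ``collide,'' regardless of whether $S$ has adjacent indices; the paper's choice of $\alpha_i,\beta_i$ works without ever appealing to it. What is actually used, at every inductive step, is the strictly stronger fact you stated first, $B_iA_{i+1}\not\subseteq D$: it guarantees $t(B_iA_{i+1})\neq t'(B_iA_{i+1})$, so a bijective enumeration of $(B_i,A_{i+1})$ can accommodate both distinguished tuples. The cycle-independence of $S$ is a true consequence of this but plays no role.
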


\begin{proof}
By symmetry, we may assume that $\psi=\key{B_nA_1}$.
Let us assume to the contrary that  $\phi \not\in \Cu{\Si_n}$ where $\phi = \key{D}$ for some $D \sub R_n$. We will show that $\Si' \not\models_{\rm FIN} \phi$ by constructing a finite relation $r$ over  $R_n$ such that $r \models \Si'$ and $r \not\models \phi$.
For the construction of $r$, we will first associate each $1\leq i \leq n$ with natural numbers $a_i$ and $b_i$. Later $r$ will be defined inductively so that $|r(A_i)|= a_i$ and $|r(B_i)|=b_i$.

For defining $a_i$ and $b_i$, first let $m$ be the number of indices $1 \leq i \leq n$ such that $A_iB_i \sub D$, and let $M:= (m+3)!$. 
For $1 \leq i \leq n$, we define $a_i,b_i \geq 2$ as follows:\footnote{The  following definition of $a_i,b_i$, for $1 \leq i \leq 7$,  is illustrated in Figure \ref{kuva1} in case $D := \{A_1B_1A_3B_3A_5B_5A_7\}$. Note that in the example $m=3$ and $M = 720$.} We let $a_1:=2$, and if $a_i$ is defined, then we let
\begin{equation*}
b_i:=\begin{cases} \frac{M}{a_i}&\textrm{if }A_iB_i \not\sub D,\\
\frac{M}{a_i+1}&\textrm{if }A_iB_i \sub D,
\end{cases}
\end{equation*}
and $a_{i+1}:= \frac{M}{b_i}$.
It is straightforward to check that with this definition, for all $1 \leq i \leq n$, $a_i,b_i \in \mathbb{N}\setminus\{0,1\}$ and
\begin{equation}\label{(ii)} M =\begin{cases}
b_i \cdot a_{i+1}&\textrm{ if }i \leq n-1,\\
a_i\cdot b_i&\textrm{ if }A_iB_i \not\sub D,\\
(a_i+1)\cdot b_i &\textrm{ if } A_iB_i \sub D.
\end{cases}
\end{equation}



\begin{figure}[h]
\center
\begin{center}
    \scalebox{1.1}{\begin{tabular}{  | c | c | c | c | c | c | c | c | c | c | c | c | c | c | }
        $\pmb{a_1}$  & $\pmb{b_1}$ & $a_2$ & $b_2$ & $\pmb{a_3}$ & $\pmb{b_3}$ & $a_4$ & $b_4$  & $\pmb{a_5}$ & $\pmb{b_5}$ & $a_{6}$ & $b_{6}$ & $\pmb{a_7}$ & $b_7$  \\ \hline
  $2$ & $240 $ & $3$  & $240$ & $3$ & $180$ & $4$ & $180$ &  $4$  & $144$ & $5$ & $144$ & $5$  &  $144$ \\ \hline

\end{tabular}}
\end{center}
\caption{} \label{kuva1}
\end{figure}
We are now ready to define $r$. First we define two tuples $t$ and $t'$ as follows:\footnote{In our example, $t$ and $t'$ are defined as in Figure \ref{kuva2}.}
\begin{itemize}
\item $t(A)=0$ for all $A \in R_n$,
\item $t'(A) = \begin{cases}0&\textrm{ if }A \in  D,\\
1&\textrm{ if } A \in R_n\setminus  D.
\end{cases}$

\end{itemize}

\begin{figure}[h]
\center
\begin{center}
    \scalebox{1.1}{\begin{tabular}{  c | c | c | c | c | c | c | c | c | c | c | c | c | c | c | }
      &  $\pmb{A_1}$  & $\pmb{B_1}$ & $A_2$ & $B_2$ & $\pmb{A_3}$ & $\pmb{B_3}$ & $A_4$ & $B_4$  & $\pmb{A_5}$ & $\pmb{B_5}$ & $A_{6}$ & $B_{6}$ & $\pmb{A_7}$ & $B_7$  \\ \hline
 $t$ & $0$ & $0 $ & $0$  & $0$ & $0$ & $0$ & $0$ & $0$ &  $0$  & $0$ & $0$ & $0$ & $0$  &  $0$ \\ \hline
 $t'$ & $0$ & $0 $ & $1$  & $1$ & $0$ & $0$ & $1$ & $1$ &  $0$  & $0$ & $1$ & $1$ & $0$  &  $1$ \\ \hline

\end{tabular}}
\end{center}
\caption{} \label{kuva2}
\end{figure}
Since $\{t,t'\}\not\models \key{D}$, it suffices to embed $\{t,t'\}$ to a finite relation $r$ such that $r\models \Si'$. The relation $r$ will be defined inductively over columns. Namely, for $1 \leq i \leq n$, we will define a relation $r_i=\{t_0, \ldots ,t_{M-1}\}$ over  $R_i$ so that, for $i > 1$,
\begin{enumerate}
\item $r_i \models \Si' \upharpoonright R_i$,
\item $r_i(B_i)=\{0, \ldots ,b_i-1\}$ and $|r_i(B_i = l)|= \frac{M}{b_i}$, for each $0 \leq l \leq b_i-1$,
\item $t\upharpoonright R_i\setminus{A_1} = t_0$ and $t'\upharpoonright R_i\setminus{A_1} = t_1$.
\end{enumerate}
For $i=1$, we will introduce one new extra symbol $*$ that appears in column $A_1$. In the end of the construction, we let $r$ be obtained from $\bigcup_{1\leq i \leq n} r_i$ by replacing, in column $A_1$, $*$ with $0$. Then we will obtain that $r\models  \Si'$ and $\{t,t'\}=\{t_0,t_1\}$.
\begin{itemize}
\item Assume first that $i=1$. If $A_1B_1 \not\sub D$, then we let $r_1=\{t_0, \ldots ,t_{M-1}\}$ be a relation where $t_0(A_1B_1), \ldots , t_{M-1}(A_1B_1)$ is an enumeration of $\{0,1\}\times \{0, \ldots ,b_1-1\}$ such that $t_0(A_1B_1)=t(A_1B_1)$ and $t_1(A_1B_1) = t'(A_1B_1)$. Assume then that $A_1B_1 \sub D$. Then we let $r_1 =\{t_0, \ldots ,t_{M-1}\}$ be a relation where $t_0(A_1B_1), \ldots , t_{M-1}(A_1B_1)$ is an enumeration of $\{0,1,*\} \times \{0, \ldots ,b_1-1\}$ where $t_0(A_1B_1)=00$ and $t_1(A_1B_1)=*\hspace{.2mm}0$.

Since $ \Si' \upharpoonright R_1 = \{\indepc{A_1}{B_1}\}$, it is straightforward to check that items 1-3 hold.\footnote{The construction of $r_i$ up to $i=2$  is illustrated in Figure \ref{kuva3} in our example.} 

\begin{figure}[h]
\center
\begin{center}
    \scalebox{1.1}{\begin{tabular}{  c | c | c | c | c | c | c | c | c | c | c | c | c | c | c | }
      &  $\pmb{A_1}$  & $\pmb{B_1}$ & $A_2$ & $B_2$ & $\pmb{A_3}$ & $\pmb{B_3}$ & $A_4$ & $B_4$  & $\pmb{A_5}$ & $\pmb{B_5}$ & $A_{6}$ & $B_{6}$ & $\pmb{A_7}$ & $\pmb{B_7}$  \\ \hline
 $t_0$ & $0$ & $0 $ & $0$  & $0$ & $0$ & $0$ & $0$ & $0$ &  $0$  & $0$ & $0$ & $0$ & $0$  &  $0$ \\ \hline
 $t_1$ & $*$ & $0 $ & $1$  & $1$ & $0$ & $0$ & $1$ & $1$ &  $0$  & $0$ & $1$ & $1$ & $0$  &  $0$ \\ \hline
$t_2$ & $1$ & $0$ & $2$ & $0$ \\ \cline{1-5}
$t_3$ & $0$ & $1$  & $0$& $1$ \\  \cline{1-5}
$t_4$ & $*$ & $1$  & $1$ & $0$ \\ \cline{1-5}
$t_5$ & $1$ & $1$  & $2$ & $1$ \\ \cline{1-5}
$t_6$ & $0$ & $2$  & $0$& $2$ \\  \cline{1-5}
$t_7$ & $*$ & $2$  & $1$ & $2$ \\ \cline{1-5}
$t_8$ & $1$ & $2$  & $2$ & $2$ \\ \cline{1-5}
 $\vdots$ & $\vdots$ & $\vdots$ & $\vdots $  & $\vdots $\\ \cline{1-5}
$t_{717}$ & $0$ & $239$  & $0$ & $239$\\ \cline{1-5}
$t_{718}$ & $*$ & $239$  & $1$ & $239$\\ \cline{1-5}
$t_{719}$ & $1$ & $239$  & $2$ & $239$ \\ \cline{1-5}

\end{tabular}}
\end{center}
\caption{} \label{kuva3}
\end{figure}

\item Assume that $1\leq i < n $ and  $r_i(R_i) =\{t_0, \ldots ,t_{M-1}\}$ satisfies items 1-3. We will first extend $r_i$ to a relation $r^*(R_iA_{i+1})$ of size $M$ satisfying $\key{B_iA_{i+1}}$. First note that by the assumption $\key{D} \not\in \Cu{\Si_n}$, $B_iA_{i+1} \not\sub D$ when
\begin{equation}\label{erit}
t(B_{i}A_{i+1})\neq t'(B_{i}A_{i+1}).
\end{equation} Also by $\eqref{(ii)}$,  $M = b_i \cdot a_{i+1}$ when by item 2 of the induction assumption, $r_i(B_i)=\{0,\ldots ,b_i-1\}$ and $|r_i(B_i = l)|= a_{i+1}$, for each $0 \leq l \leq b_i-1$.
Hence and by \eqref{erit} we can define $r^*$ as a relation obtained from $r_i$ by extending each $t\in r_i$ with a value $t(A_{i+1}) \in \{0, \ldots ,a_{i+1}-1\}$ where $r^*(B_iA_{i+1})$ is an enumeration of $\{0, \ldots ,b_i-1\}\times \{0, \ldots ,a_{i+1}-1\}$ such that $t_0(B_iA_{i+1})=t(B_iA_{i+1})$ and $t_1(B_iA_{i+1})=t'(B_iA_{i+1})$. Since no repetitions occur in the enumeration, we obtain that $r^* \models \key{B_{i}A_{i+1}}$.

Next we will extend $r^*$ to $r_{i+1}$ satisfying items 1-3 of the induction claim. We have two cases:
\begin{itemize}
\item First assume that $A_{i+1}B_{i+1} \not\sub D$ when
\begin{equation}\label{erit2}
t(A_{i+1}B_{i+1})\neq t'(A_{i+1}B_{i+1}).
\end{equation}
Also by the previous construction and since $b_i = b_{i+1}$ by \eqref{(ii)}, $r^*(A_{i+1})=\{0, \ldots ,a_{i+1}-1\}$ and $|r^*(A_{i+1}= l)|= b_{i+1}$ for $0 \leq l \leq a_{i+1}-1$. Hence and by \eqref{erit2}, we can define $r_{i+1}$ as a relation obtained from $r^*$ by extending each $t \in r^*$ with a value $t(B_{i+1}) \in \{0, \ldots ,b_{i+1}-1\}$ where $r_{i+1}(A_{i+1}B_{i+1})$ is an enumeration of $\{0, \ldots ,a_{i+1}-1\}\times \{0, \ldots ,b_{i+1}-1\}$ such that $t_0(A_{i+1}B_{i+1})=t(A_{i+1}B_{i+1})$ and  $t_1(A_{i+1}B_{i+1})=t'(A_{i+1}B_{i+1})$. By \eqref{(ii)} and the construction it is straightforward to check that $r_{i+1}$ satisfies items 1-3 of the induction claim.

\item Assume then that $A_{i+1}B_{i+1} \sub D$. Then
\begin{equation}\label{samat}
t(A_{i+1}B_{i+1})= 00 = t'(A_{i+1}B_{i+1})
\end{equation}
and by \eqref{(ii)},
\begin{equation}\label{tää}
M =b_i \cdot a_{i+1}= (a_{i+1}+1)\cdot b_{i+1}.
\end{equation}
Recall also that  by \eqref{samat} and the previous construction, $r^*=\{t_0, \ldots, t_{M-1}\}$ is such that $t_0(A_{i+1})=t_1(A_{i+1})=0$,
$r^*(A_{i+1})=\{0, \ldots ,a_{i+1}-1\}$ and $|r^*(A_{i+1}= l)|= b_i$ for $0 \leq l \leq a_{i+1}-1$. Hence, and since $b_{i+1} < b_i$ by \eqref{tää}, we can also enumerate $r^*(A_{i+1})$ by pairs $(k,l) \in \{0, \ldots ,a_{i+1}-1\}\times \{0, \ldots ,b_{i}-1\}$ such that
\begin{itemize}
\item $t_{(k,l)}(A_{i+1}) = k$ for all $(k,l) \in \{0, \ldots ,a_{i+1}-1\}\times \{0, \ldots ,b_{i}-1\}$,
\item $t_{(0,0)} = t_0$,
\item $t_{(0,b_{i+1})} = t_1$.
\end{itemize}
By \eqref{samat} $r_{i+1}$ should now  be defined so that $r_{i+1}(A_{i+1}B_{i+1})$ has repetitions in the first two rows. Therefore, unlike in the first case, we cannot define $r_{i+1}$ as the relation extending $r^*$ with the values of $B_{i+1}$  that are obtained directly from the binary enumeration presented above. Instead, we let $r_{i+1}$ be obtained from $r^*$ by extending each $t_{(k,l)} \in r^*$ with
\begin{equation*}
t_{(k,l)}(B_{i+1})= \begin{cases}  l &\textrm{if }0 \leq l \leq b_{i+1}-1,\\
 N-1 &\textrm{if $b_{i+1} \leq l \leq b_i-1$ and $(k,l)$ is the $N$th}\\
&\textrm{member of $\{0, \ldots ,a_{i+1}-1\}\times \{b_{i+1}, \ldots ,b_i-1\}$}\\
&\textrm{in lexicographic order.}
\end{cases}
\end{equation*}
Then we obtain that $t_0(B_{i+1}) = t_1(B_{i+1})=0$. Moreover by \eqref{tää},
$$b_{i+1} = a_{i+1}(b_i-b_{i+1}),$$
and therefore $\{0, \ldots ,a_{i+1}-1\}\times \{b_{i+1}, \ldots ,b_i-1\}$ is of size $b_{i+1}$. Hence by the definition of $r_{i+1}$, we obtain that $r_{i+1}(B_{i+1})=\{0, \ldots ,b_{i+1}-1\}$ and $$|r_{i+1}(B_{i+1} = l)|= a_{i+1}+1 = \frac{M}{b_{i+1}},$$ for each $0 \leq l \leq b_{i+1}-1$. Finally, since $r_{i+1}(A_iB_i) = \{0, \ldots ,a_{i+1}-1\}\times \{0, \ldots ,b_{i+1}-1\}$, we obtain that $r_i \models \indepc{A_{i+1}}{B_{i+1}}$ when $r_{i+1} \models \Si' \upharpoonright R_{i+1}$. Hence $r_{i+1}$ satisfies the induction claim. This concludes the case $A_{i+1}B_{i+1} \sub D$ and the construction.

\end{itemize}
\end{itemize}

We then let $r$ be obtained from $\bigcup_{1\leq i \leq n} r_i$ by replacing, in column $A_1$, $*$ with $0$. Clearly $r \models A_1 \boto B_1$ and $\{t,t'\}=\{t_0,t_1\}$. Hence we obtain that $r\models  \Si'$ and  $r \not\models \key{D}$. This concludes the proof of Lemma \ref{mainlemma}.\qed

\end{proof}

The remaining cases are stated in the following lemmata. In the next case $\psi$ is an independence atom and $\phi$ is a key.

\begin{lemma}\label{apu2}

Let $n \geq 2$, $\Si' := \Si_n \setminus\{\psi\}$ where $\psi \in \Si_n$ is a unary independence atom, and let $\phi$ be a $R_n$-key such that $\Si' \models_{\rm FIN} \phi$. Then $\phi \in \Cu{\Si_n}$.
\end{lemma}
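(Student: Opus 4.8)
We follow the template of Lemma \ref{mainlemma}: assume $\phi = \key{D}$ with $\phi \notin \Cu{\Si_n}$ and build a finite counterexample relation $r$ over $R_n$ with $r \models \Si'$ but $r \not\models \key{D}$. Here $\Si' = \Si_n \setminus \{\psi\}$ with $\psi = \indepc{A_j}{B_j}$ for some $1 \le j \le n$; so $\Si'$ contains all $n$ keys $\key{B_iA_{i+1}}$ (indices mod $n$) and all independence atoms $\indepc{A_i}{B_i}$ except the $j$-th. The point is that dropping one independence atom removes the size-propagation constraint that was the engine of Lemma \ref{finimpl}: at index $j$ the quantities $|r(A_j)|$ and $|r(B_j)|$ are no longer forced to satisfy the multiplicative relation, so the otherwise-rigid chain $|r(A_1)| \le |r(A_2)| \le \cdots$ and $|r(B_1)| \ge |r(B_2)| \ge \cdots$ (read around the cycle through the key $\key{B_nA_1}$) can be ``broken'' at $j$, giving us the slack needed to violate $\key{D}$.

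The plan is to first pick, as in Lemma \ref{mainlemma}, two tuples $t, t'$ that agree exactly on $D$ (so $\{t,t'\} \not\models \key{D}$), and then assign target column sizes $a_i := |r(A_i)|$, $b_i := |r(B_i)|$. I would set these up so that (i) $b_i \cdot a_{i+1} = M$ for all $i$ with $1 \le i \le n$ \emph{except} $i = j-1$ (the key $\key{B_{j-1}A_j}$ is where we break the cycle, but wait --- it is actually the missing \emph{independence atom} at $j$ that frees us, so the break in the size chain happens between $b_j$ and $a_{j+1}$ is still tied; let me instead arrange that $a_i b_i = M$ whenever $A_iB_i \sub D$ fails and $(a_i{+}1) b_i = M$ whenever $A_iB_i \sub D$, for all $i \ne j$, while at $i = j$ the constraint linking $a_j$ and $b_j$ is simply absent), and (ii) the cyclic consistency forced by the keys $\key{B_iA_{i+1}}$ around the whole ring is satisfiable precisely because the one free index $j$ absorbs any discrepancy. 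Concretely I would walk around the cycle starting just after $j$, propagating sizes via $a_{i+1} = M/b_i$ and $b_i = M/a_i$ or $M/(a_i{+}1)$ exactly as in Lemma \ref{mainlemma}, and when I return to index $j$ I simply \emph{define} $a_j$ and $b_j$ to be whatever the propagation from both sides demands --- there is no independence atom at $j$ to contradict, and the key $\key{B_jA_{j+1}}$ (resp. $\key{B_{j-1}A_j}$) only needs $b_j a_{j+1} = |r(B_jA_{j+1})| \le M$ and $b_{j-1} a_j \le M$, which I can guarantee by choosing $M$ divisible enough (again $M = (m+3)!$ with $m$ the number of indices with $A_iB_i \sub D$, or a suitable factorial). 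The relation $r$ is then built column by column by the same inductive binary-enumeration construction as in Lemma \ref{mainlemma}, embedding $\{t,t'\}$, and the verification that $r \models \indepc{A_i}{B_i}$ for $i \ne j$ and $r \models \key{B_iA_{i+1}}$ for all $i$ is the same bookkeeping; the index $j$ contributes no independence-atom obligation, so we are done.

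The main obstacle is making the cyclic size assignment genuinely consistent: unlike Lemma \ref{mainlemma}, where the removed constraint was a key and the chain of sizes was simply ``cut'' at one edge, here the removed constraint is an independence atom and the key cycle $\key{B_1A_2}, \ldots, \key{B_nA_1}$ still closes up, so I must check that the two size-propagation paths meeting at index $j$ can be reconciled without violating any surviving key. I expect this to work because each key $\key{B_iA_{i+1}}$ only demands an \emph{inequality} $|r(B_iA_{i+1})| \le |r(B_i)| \cdot |r(A_{i+1})|$ together with the uniqueness condition, and by padding $M$ to a large factorial all the needed divisibilities hold; the delicate point is the special sub-case $A_jB_j \sub D$, where we additionally need $t$ and $t'$ to agree on $A_jB_j$ while still being separable elsewhere, handled (as in Lemma \ref{mainlemma}) by the auxiliary symbol $*$ in column $A_1$ or an analogous trick in column $A_j$. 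Once the sizes are fixed the rest is routine, parallel to Lemma \ref{mainlemma}. \qed
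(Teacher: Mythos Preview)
Your plan is workable in principle but takes a much longer road than the paper, and some of your size bookkeeping is muddled.

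The paper does not rebuild anything. By symmetry it takes $\psi = \indepc{A_1}{B_1}$, sets $\Si^* := \Si_n \setminus \{\key{B_nA_1}\}$, and simply \emph{invokes} Lemma~\ref{mainlemma} to obtain the finite relation $r^* = \{t_0,\ldots,t_{M-1}\}$ with $r^* \models \Si^*$ and $\{t_0,t_1\} \not\models \key{D}$. It then overwrites column $A_1$ only: put $t_i(A_1) := i$ for $i \neq 1$, and put $t_1(A_1) := 0$ or $1$ according as $B_n \notin D$ or $B_n \in D$. Since $A_1$ occurs in $\Si_n$ only inside $\key{B_nA_1}$ and $\indepc{A_1}{B_1}$, this edit (i) restores $\key{B_nA_1}$ (the $A_1$-values are pairwise distinct except possibly $t_0,t_1$, and for those one uses $B_nA_1 \not\subseteq D$), (ii) may destroy $\indepc{A_1}{B_1}$, which is exactly $\psi$ and hence irrelevant, and (iii) leaves every other constraint of $\Si_n$ untouched. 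The violation of $\key{D}$ at $\{t_0,t_1\}$ survives. That is the whole proof.

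Your route instead re-executes the entire column-by-column construction of Lemma~\ref{mainlemma} around the closed key-cycle, which forces a genuinely new step you only gesture at: after the walk returns to index $j$ you must define $B_j$ so that $(B_j,A_{j+1})$ is a key \emph{with $A_{j+1}$ already fixed from the start}, while also keeping $t_0,t_1$ aligned on $D$. This can be done, but your justification for why it goes through is off: a key $\key{B_iA_{i+1}}$ forces $|r(B_iA_{i+1})| = |r|$, hence $b_i \cdot a_{i+1} \ge M$, not the $\le M$ you wrote; and the inequality $|r(B_iA_{i+1})| \le |r(B_i)|\cdot |r(A_{i+1})|$ you appeal to is vacuous (it holds for every relation and constrains nothing). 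These slips do not doom the strategy, but they signal that the cyclic-consistency argument is not yet in hand. The paper's one-column edit sidesteps all of this.
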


\begin{proof}
By symmetry, we may assume that $\psi=\indepc{A_1}{B_1}$. Let us assume to the contrary that $\phi \not\in \Cu{\Si_n}$ where $\phi = \key{D}$ for some $D \sub R_n$. We will show that $\Si' \not\models_{\rm FIN} \phi$.
First we define $\Si^*:= \Si_n\setminus\{\key{B_nA_1}\}$. Then by the proof of Lemma \ref{mainlemma}, there exists a finite relation $r^*= \{t_0, \ldots , t_{M-1}\}$ such that $r^*\models \Si^*$, $\{t_0,t_1\}\not\models \key{D}$, $t_0(X) = 0$ for all $X\in R_n$, and \\
$$t_1(X) =\begin{cases}0 &\textrm{ if }X \in D,\\
1 &\textrm{ if }X \in R_n \setminus D.
\end{cases}$$
We let $r$ be obtained\footnote{See Fig. \ref{kuva3.5}} from $r^*$ by replacing, for $0 \leq i \leq M-1$, $t_i(A_1)$ with
\begin{itemize}
\item    $\hspace{.25cm}i\hspace{.4cm}\textrm{  if  } i\neq 1$,
\item  $\begin{cases} 0 &\textrm{ if }i = 1 \textrm{ and } B_n \not\in D,\\
1 &\textrm{ if } i=1 \textrm{ and } B_n\in D.
\end{cases}$
\end{itemize}
From the definition of $r$ and the fact that $A_1B_n \not\sub D$ it follows that $r \not\models \key{D}$ and $r \models \Si^* \setminus \{ \indepc{A_1}{B_1} \}$. For $r \models \Si'$, we still need to show that $r \models \key{B_nA_1}$. Because of the definition of $t_i(A_1)$ in $r$, $\key{B_nA_1}$ could be violated only in $\{t_0,t_1\}$. In that case we would have $t_1(A_1B_1)=00$ in $r$ which contradicts with the definitions. Hence we obtain that $r \not\models \key{D}$ which concludes the proof.\qed

\end{proof}
\begin{figure}[h]
\center
\begin{center}
    \scalebox{1.1}{\begin{tabular}{  c | c | c | c | c | c | c | c | c | c | c | c | c | c | c | }
      &  $A_1$  & $B_1$ & $A_2$ & $B_2$ & $A_3$ & $B_3$ &  $\ldots $ & $\ldots $ & $A_{n-1}$ & $B_{n-1}$ & $A_n$ & $B_n$  \\ \hline
 $t_0$ & $0$ & $ $ &   $$&  &  &  &  &  &  $$  &  $$ & $$  &  $0$ \\ \hline
 $t_1$ & $0$ & $ $ & $$  & $$ & $$ & $$ & $$ &  $$ & $$ & $$ & $$  &  $1$ \\ \hline
$t_2$ & $2$ & $$ & $$ &&&&&&&&& $y_2$\\ \cline{1-13}
$t_3$ & $3$ & $$  & $$& $$  &&&&&&&& $y_3$ \\  \cline{1-13}
$\vdots$ & $\vdots$ & $\vdots$ & $\vdots $&$\vdots $&$\vdots $&$\vdots $&$\vdots $&$\vdots $&$\vdots $&$\vdots $&$\vdots $& $\vdots $\\ \hline
$t_{M-2}$ & $M-2$ & $$  & $$ & $$ &&&&&&&&$y_{M-2}$\\ \cline{1-13}
$t_{M-1}$ & $M-1$ & $$  & $$ & $$ &&&&&&&&$y_{M-1}$\\ \cline{1-13}

\end{tabular}}
\end{center}
\caption{$r$ in case $B_n \not\in D$} \label{kuva3.5}
\end{figure}

In the third case $\psi$ is a key and $\phi$ is an independence atom.
\begin{lemma}\label{apu3}

Let $n \geq 2$, $\Si' := \Si_n \setminus\{\psi\}$ where $\psi \in \Si_n$ is a key, and let $\phi$ be a unary $R_n$-independence atom such that $\Si' \models_{\rm FIN} \phi$. Then $\phi \in \Cu{\Si_n}$.
\end{lemma}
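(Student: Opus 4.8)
The plan is to mimic the structure of Lemma \ref{apu2} but with the roles of keys and independence atoms interchanged on the hypothesis side. By the cyclic symmetry of $\Si_n$ we may assume $\psi = \key{B_nA_1}$. Suppose $\phi = \indepc{C}{D}$ is a unary independence atom with $\phi \notin \Cu{\Si_n}$; since $\Cu{\Si_n}$ contains every independence atom of $\Si_n$ and $\phi$ is not a key, this just means $\phi \notin \Si_n$, i.e.\ $\indepc{C}{D}$ is not one of the atoms $\indepc{A_i}{B_i}$ (nor, by $\mathcal{R}1$–$\mathcal{R}4$, a trivial consequence like $\indepc{\emptyset}{A}$ or $\indepc{A_i}{A_i}$ for an $A_i$ that is ``constant'' under $\Si_n$ — but one checks $\Si_n$ forces no attribute to be constant). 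The goal is then to build a finite relation $r$ with $r \models \Si' = \Si_n \setminus \{\key{B_nA_1}\}$ and $r \not\models \indepc{C}{D}$.

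First I would invoke Lemma \ref{mainlemma} (or rather its proof) with $\Si^* := \Si_n \setminus \{\key{B_nA_1}\}$: note $\Si^* = \Si'$ here, so actually we want a model of all of $\Si^*$ in which some chosen pair of values on $C$ and on $D$ has no witness. Concretely, pick two tuples $t_0, t_1$ that will appear in $r$; we want an index $i$ and values so that $t_0(C)$ combined with $t_1(D)$ is not realized by any tuple of $r$. The cleanest way: choose $D$-side and $C$-side ``profiles'' exactly as in the proof of Lemma \ref{mainlemma} — there one constructs, for any prescribed $D_0 \sub R_n$ with $B_nA_1 \not\sub D_0$, a finite relation $r^*$ over $R_n$ satisfying $\Si_n \setminus \{\key{B_nA_1}\}$ in which the two distinguished tuples $t_0,t_1$ agree exactly on $D_0$. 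The idea is to apply this with a set $D_0$ engineered so that $t_0$ and $t_1$ disagree somewhere inside $C$ and somewhere inside $D$ (possible because $\indepc{C}{D}$ is not a trivial atom and not one of the $\indepc{A_i}{B_i}$, so $C$ and $D$ are not contained in any single ``constant block''), and so that no third tuple of $r^*$ mixes $t_0(C)$ with $t_1(D)$. The last point is where the combinatorics of $r^*$ — its column sizes $a_i, b_i$ and the fact that each $B_iA_{i+1}$ column pair is enumerated without repetition — must be used to show the mixed value $t_0(C) t_1(D)$ simply does not occur.

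The main obstacle I anticipate is exactly verifying that $\indepc{C}{D}$ fails in the constructed relation, i.e.\ ruling out any witnessing tuple. Since $C$ and $D$ are single attributes, $t_0(C)t_1(D)$ is a pair of specific values in two columns; one must choose the distinguished tuples and the global set $D_0$ so that these two columns are among the ``generic'' columns where $r^*$ puts distinct large integers, guaranteeing that the only tuples carrying value $t_0(C)$ in column $C$ are those in a controlled block, none of which carries $t_1(D)$ in column $D$. A clean case split on whether $C$ and $D$ lie in adjacent key-blocks or not (mirroring the $A_iB_i \sub D$ vs.\ $A_iB_i \not\sub D$ dichotomy of Lemma \ref{mainlemma}) should handle this; when $\{C,D\} = \{A_i, B_i\}$ for some $i$ is excluded by hypothesis, there is always enough room. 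Once $r \models \Si'$ and $r \not\models \phi$ are established, we conclude $\Si' \not\models_{\rm FIN} \phi$, contradicting the assumption, so $\phi \in \Cu{\Si_n}$. \qed
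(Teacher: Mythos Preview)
Your proposal has a genuine gap, and the route you sketch is quite different from (and much heavier than) the paper's.

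The paper does \emph{not} reuse the factorial-sized relation from Lemma~\ref{mainlemma} at all. Instead it performs a short case analysis on the form of $\phi=\indepc{A_i}{Y}$ (with $Y\in R_n\setminus\{B_i\}$, reducing to this by $\mathcal{R}2$ and the symmetry of $\Si'$) and for each case writes down an explicit $\{0,1\}$-valued relation with only two or four tuples. One relation $r$ handles $\indepc{A_i}{A_j}$ for $j\le i$ and $\indepc{A_i}{B_j}$ for $j>i$; a second relation $r'$ handles the remaining cases. Verifying $r,r'\models\Si'$ and $r,r'\not\models\phi$ is then a direct hand check.

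Your approach, by contrast, tries to piggyback on the Lemma~\ref{mainlemma} construction. The problem is that that construction is engineered to violate a \emph{key} $\key{D_0}$, not an independence atom, and you never actually show that the resulting relation fails $\indepc{C}{D}$. Your claim that ``these two columns are among the `generic' columns where $r^*$ puts distinct large integers'' is simply false for the Lemma~\ref{mainlemma} relation: every column there takes values in a small set $\{0,\ldots,a_i-1\}$ or $\{0,\ldots,b_i-1\}$, each value appearing many times (indeed $t_0(A_i)=0$ recurs in $b_{i-1}\ge 2$ tuples). Whether a pair like $(t_0(C),t_1(D))$ is realized depends on the arbitrary enumerations chosen in the inductive step, and nothing in your argument controls those enumerations for columns other than $A_i,B_i$. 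So the crucial step ``$r\not\models\indepc{C}{D}$'' is missing, and it is not clear it can be filled in without essentially redoing the construction with new constraints --- at which point the small hand-built counterexamples of the paper are far more efficient.
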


\begin{proof}
By symmetry, we may assume that $\psi = \key{B_nA_1}$. Assume to the contrary that $\phi \not\in \Cu{\Si_n}$. We will show that $\Si' \not\models_{\rm FIN} \phi$. 
Due to $\ax2$
and by symmetry of $\Si'$, it suffices to consider only the cases where $\phi= \indepc{A_i}{Y}$, for some $1 \leq i \leq n$ and $Y\in R_n\setminus\{ B_i\}$.

So let $1 \leq i \leq n$. We will construct two finite relations $r$ and $r'$ such that
\begin{enumerate}
\item $r \models \Si'$ and $r'\models \Si'$,
\item $r\not\models$ $\begin{cases}\indepc{A_i}{A_j} &\textrm{for }j\leq i,\\
\indepc{A_i}{B_j} &\textrm{for } j> i,
\end{cases}$
\item  $r'\not\models$ $\begin{cases}\indepc{A_i}{A_j} &\textrm{for }j> i,\\
\indepc{A_i}{B_j} &\textrm{for } j< i.
\end{cases}$
\end{enumerate}
We let $r:= \{t_0,t_1,t_2,t_3\}$ where we define, for $X \in R_n$,
\begin{itemize}
\item $t_0(X)= 0$,
\item $t_1(X)=$$\begin{cases}0 &\textrm{if }X=A_j\textrm{ for } j \leq i,\textrm{ or }X= B_j\textrm{ for } j > i,\\
1 &\textrm{otherwise,}
\end{cases}$
\item $t_2(X)=$$\begin{cases}0 &\textrm{if }X=B_i,\\
1 &\textrm{otherwise,}
\end{cases}$
\item $t_3(X)=$$\begin{cases}0 &\textrm{if }X=B_j \textrm{ for } j < i,\textrm{ or }X=A_j \textrm{ for } j> i,\\
1 &\textrm{otherwise.}
\end{cases}$
\end{itemize}
\begin{figure}[h]
\center
\begin{center}
    \scalebox{1.1}{\begin{tabular}{  c | c | c | c | c | c | c | c | c | c | c | c | c | c | c | c | c |  }
      &  $A_1$  & $B_1$ &  $\ldots $ & $\ldots $ & $A_{i-1}$ & $B_{i-1} $ & $A_i$ & $B_i $ & $A_{i+1}$ & $B_{i+1}$ & $\ldots $ & $\ldots $ & $A_n$ & $B_n$  \\ \hline
 $t_0$ & $0$ & $ 0$ &   $0$& $0$ & $0$  & $0$ & $0$ & $0$ &  $0$  &  $0$ & $0$  &  $0$ &$0$ &$0$ \\ \hline
 $t_1$ & $0$ & $ 1$ & $0$  & $1$ & $0$ & $1$ & $0$ &  $1$ & $1$ & $0$ & $1$  &  $0$ &$1$ &$0$ \\ \hline
$t_2$ & $1$ & $1$ & $1$ & $1$ &$1$& $1$ &$1$&$0$&$1$&$1$&$1$& $1$ &$1$ &$1$\\ \hline
$t_3$ & $1$ & $0$  & $1$& $0$  &$1$&$0$&$1$&$1$&$0$&$1$&$0$& $1$ &$0$ &$1$\\  \hline

\end{tabular}}
\end{center}
\caption{$r$} \label{kuva4}
\end{figure}
Then we let $r':=\{t_0,t_4\}$ where we define, for $X \in R_n$,
\begin{itemize}
\item $t_4(X)=$$\begin{cases}0 &\textrm{if }X=B_j\textrm{ for } j < i,\textrm{ or }X= A_j \textrm{ for }j \geq i,\\
1 &\textrm{otherwise.}
\end{cases}$
\end{itemize}
\begin{figure}[h]
\center
\begin{center}
    \scalebox{1.1}{\begin{tabular}{  c | c | c | c | c | c | c | c | c | c | c | c | c | c | c | c | c |  }
      &  $A_1$  & $B_1$ &  $\ldots $ & $\ldots $ & $A_{i-1}$ & $B_{i-1} $ & $A_i$ & $B_i $ & $A_{i+1}$ & $B_{i+1}$ & $\ldots $ & $\ldots $ & $A_n$ & $B_n$  \\ \hline
 $t_0$ & $0$ & $ 0$ &   $0$& $0$ & $0$  & $0$ & $0$ & $0$ &  $0$  &  $0$ & $0$  &  $0$ &$0$ &$0$ \\ \hline
 $t_4$ & $0$ & $ 1$ & $0$  & $1$ & $0$ & $1$ & $1$ &  $0$ & $1$ & $0$ & $1$  &  $0$ &$1$ &$0$ \\ \hline

\end{tabular}}
\end{center}
\caption{$r'$} \label{kuva5}
\end{figure}
It is straightforward to check that items 1-3 hold. This concludes the proof of Lemma \ref{apu3}.\qed

\end{proof}

In the last case both $\psi$ and $\phi$ are independence atoms.

\begin{lemma}\label{apu4}
Let $n \geq 2$, $\Si' := \Si_n \setminus\{\psi\}$ where $\psi \in \Si_n$ is a unary independence atom, and let $\phi$ be a unary $R_n$-independence atom such that $\Si' \models_{\rm FIN} \phi$. Then $\phi \in \Cu{\Si_n}$.
\end{lemma}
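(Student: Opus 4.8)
The plan is to mirror the argument of Lemma~\ref{apu3}: assume without loss of generality (using the dihedral symmetry of $\Si_n$ noted in the footnote, i.e.\ rotations $A_i\mapsto A_{i+1}$, $B_i\mapsto B_{i+1}$ together with the reflection swapping the roles of the $A$'s and $B$'s) that $\psi=\indepc{A_1}{B_1}$. Suppose $\phi=\indepc{U}{V}\notin\Cu{\Si_n}$; since the only independence atoms in $\Si_n$ are the $\indepc{A_j}{B_j}$ and $\Cu{\Si_n}$ adds only keys, $\phi\notin\Cu{\Si_n}$ means $\phi$ is not one of the atoms $\indepc{A_j}{B_j}$ (nor, by $\ax1$ and $\ax2$, a trivial one like $\emptyset\boto V$ or $\indepc{A_j}{A_j}$ in a degenerate sense). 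Using $\ax2$ it suffices to treat $\phi=\indepc{W}{Y}$ where $W$ is a single attribute and $Y\neq$ the "partner" of $W$; and by the rotation symmetry fixing $\psi=\indepc{A_1}{B_1}$ only to the extent it does, I would reduce to finitely many shapes of $\phi$ and exhibit a small finite counterexample relation $r$ with $r\models\Si'$ but $r\not\models\phi$.

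The key steps, in order, are: (1) record the symmetry group of $\Si_n$ and check that it fixes the pair $(\psi,\Cu{\Si_n})$ appropriately, so that we may normalize $\psi=\indepc{A_1}{B_1}$; (2) classify the possible $\phi=\indepc{W}{Y}\notin\Cu{\Si_n}$ up to that symmetry — the relevant cases are $\phi=\indepc{A_1}{Y}$ with $Y\in R_n\setminus\{B_1\}$ (the atom we deleted is "about" $A_1$, so the column $A_1$ is the one whose independence we can break), and possibly $\phi=\indepc{Z}{Y}$ for $Z\notin\{A_1\}$, where one observes that all atoms $\indepc{A_j}{B_j}$ with $j\neq 1$ are still in $\Si'$ so $\indepc{A_j}{B_j}$ is forced, hence such $\phi$ must differ from every such atom and can again be broken; (3) for each surviving case build $r$ as a relation on $O(1)$ tuples over $R_n$, assigning to each attribute a constant or a two-valued pattern exactly as in Figures~\ref{kuva4} and~\ref{kuva5}, chosen so that the keys $\key{B_{j-1}A_j}$ of $\Si'$ all hold (each such key holds because on the relevant two columns no two tuples agree) while the atom $\phi$ fails (there is a missing combination $t(W)t'(Y)$); (4) verify $r\models\indepc{A_j}{B_j}$ for all $j$ for which that atom lies in $\Si'$ — since these relations are tiny, this is a direct finite check — and verify $r\models\key{B_{j-1}A_j}$ for all keys in $\Si'$, again directly; (5) conclude $\Si'\not\models_{\rm FIN}\phi$, contradiction, so $\phi\in\Cu{\Si_n}$.

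Concretely I expect the counterexample to be essentially $r':=\{t_0,t_4\}$ from Lemma~\ref{apu3} (for the "half" of the cases where $\phi=\indepc{A_i}{Y}$ with $Y$ "past" $A_i$) and $r:=\{t_0,t_1,t_2,t_3\}$ from that same proof for the complementary half, because those relations already satisfy $\Si_n\setminus\{\key{B_nA_1}\}$ and in particular satisfy every independence atom $\indepc{A_j}{B_j}$; deleting instead the atom $\indepc{A_1}{B_1}$ only \emph{weakens} the hypothesis, so those same relations still witness $\Si'\not\models_{\rm FIN}\phi$ as long as one re-checks that they do not accidentally satisfy the particular $\phi$ at hand. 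The bookkeeping is just: which pairs of columns does $\phi$ relate, and is the needed value-combination present?

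The main obstacle is the case analysis in step~(2)–(4): making sure the normalization $\psi=\indepc{A_1}{B_1}$ together with the allowed symmetries genuinely covers \emph{every} $\phi\notin\Cu{\Si_n}$, and that for each such $\phi$ at least one of the two small relations (or a minor variant) simultaneously satisfies all of $\Si'$ and refutes $\phi$. In particular one must be careful that $\phi$ could be an atom of the form $\indepc{A_i}{B_j}$ or $\indepc{A_i}{A_j}$ or $\indepc{B_i}{B_j}$ with various index relationships, and that in the relations $r,r'$ the column patterns are set so that the "forbidden" combination for $\phi$ is really absent while every key $\key{B_{j-1}A_j}\in\Si'$ (note $\key{B_nA_1}\in\Si'$ here, since the deleted constraint is an atom, not that key) is injective on its two relevant columns. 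I would organize the verification by a single table, as in Figures~\ref{kuva4} and~\ref{kuva5}, and argue once and for all that reading off any two columns of that table yields the required (non)satisfaction.
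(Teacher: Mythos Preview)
Your overall plan (normalize $\psi=\indepc{A_1}{B_1}$, classify $\phi$, build tiny counterexamples) is the right shape, and it is exactly what the paper does. The concrete error is in your proposed shortcut of reusing the relations $r$ and $r'$ from Lemma~\ref{apu3}. You write that ``deleting instead the atom $\indepc{A_1}{B_1}$ only \emph{weakens} the hypothesis''; this is false. In Lemma~\ref{apu3} the set of constraints is $\Si_n\setminus\{\key{B_nA_1}\}$, whereas here it is $\Si_n\setminus\{\indepc{A_1}{B_1}\}$. These two sets are incomparable: the present $\Si'$ \emph{contains} $\key{B_nA_1}$, which the old one did not. And the relations of Lemma~\ref{apu3} were built precisely so that $\key{B_nA_1}$ may fail: in the four-tuple relation $r$ of Figure~\ref{kuva4} one has $t_0(B_nA_1)=t_1(B_nA_1)=(0,0)$, and in the two-tuple relation $r'$ of Figure~\ref{kuva5} one has $t_0(B_nA_1)=t_4(B_nA_1)=(0,0)$. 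So neither witnesses $\Si'\not\models_{\rm FIN}\phi$ for the $\Si'$ of this lemma. You even note in your last paragraph that $\key{B_nA_1}\in\Si'$ must be checked, but you never carry out that check against the apu3 relations; had you done so, you would have seen the plan collapse.

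The paper therefore does not reuse Lemma~\ref{apu3}'s relations but constructs four \emph{new} finite relations $r_0,r_1,r_2,r_3$ tailored to this case. The key design difference is that these relations satisfy \emph{all} the keys of $\Si_n$ (including $\key{B_nA_1}$) while deliberately violating $\indepc{A_1}{B_1}$; this is achieved by making the value pattern in the $B_nA_1$ columns injective across tuples (e.g.\ in $r_0=\{t_0,t_1\}$ one has $t_1(A_1)=1$ so $t_0(B_nA_1)\neq t_1(B_nA_1)$). The case split is also slightly finer than in Lemma~\ref{apu3}: $r_0$ handles $\indepc{A_i}{A_j}$ for all $i,j$; $r_1$ handles $\indepc{A_1}{B_j}$ with $j>1$; and, only when $i>1$, $r_2$ handles $\indepc{A_i}{B_j}$ with $j<i$ and a four-tuple $r_3$ handles $\indepc{A_i}{B_j}$ with $j>i$. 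You will need to produce relations of this kind rather than borrow the ones from the previous lemma.
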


\begin{proof}
By symmetry, we may assume that $\psi= \indepc{A_1}{B_1}$.
Let us assume to the contrary that $\phi \not\in \Cu{\Si_n}$. We will show that $\Si' \not\models_{\rm FIN} \phi$. Analogously to the proof of Lemma \ref{apu3}, it suffices to consider only the cases where $\phi= \indepc{A_i}{Y}$, for some $1 \leq i \leq n$ and $Y\in R_n\setminus\{ B_i\}$. Let $1 \leq i \leq n$. We will construct four relations $r_0,r_1,r_2,r_3$ such that
\begin{enumerate}
\item $r_i \models \Si'$ for $i=0,1,2,3$,
\item  $r_0\not\models\indepc{A_i}{A_j} \textrm{ for }1 \leq j\leq  n$,
\item $r_1\not\models\indepc{A_1}{B_j} \textrm{ for }1< j$,
\end{enumerate}
and if $1 < i$,
\begin{itemize}
\item[4.]  $r_2\not\models\indepc{A_i}{B_j} \textrm{ for }j< i$,
\item[5.]  $r_3\not\models\indepc{A_i}{B_j} \textrm{ for }i< j$,

\end{itemize}

We let $r_0 :=\{t_0,t_1\}$ where we define, for $X \in R_n$,

\begin{itemize}
\item $t_0(X)= 0$,
\item $t_1(X)=$$\begin{cases}0 &\textrm{if }X=B_j \textrm{ for } j >1,\\
1 &\textrm{otherwise.}
\end{cases}$
\end{itemize}

\begin{figure}[h]
\center
\begin{center}
    \scalebox{1.1}{\begin{tabular}{  c | c | c | c | c | c | c | c | c | c | c | c | c | c | c | c | c |  }
      &  $A_1$  & $B_1$ &  $\ldots $ & $\ldots $ & $A_{i-1}$ & $B_{i-1} $ & $A_i$ & $B_i $ & $A_{i+1}$ & $B_{i+1}$ & $\ldots $ & $\ldots $ & $A_n$ & $B_n$  \\ \hline
 $t_0$ & $0$ & $ 0$ &   $0$& $0$ & $0$  & $0$ & $0$ & $0$ &  $0$  &  $0$ & $0$  &  $0$ &$0$ &$0$ \\ \hline
 $t_1$ & $1$ & $ 1$ & $1$  & $0$ & $1$ & $0$ & $1$ &  $0$ & $1$ & $0$ & $1$  &  $0$ &$1$ &$0$ \\ \hline

\end{tabular}}
\end{center}
\caption{$r_0$} \label{kuva6}
\end{figure}
Then we let $r_1:=\{t_0,t_2\}$ where we define, for $X \in R_n$,
\begin{itemize}
\item $t_2(X)=$$\begin{cases}0 &\textrm{if }X=A_j\textrm{ for } j >1,\\
1 &\textrm{otherwise.}
\end{cases}$
\end{itemize}
\begin{figure}[h]
\center
\begin{center}
    \scalebox{1.1}{\begin{tabular}{  c | c | c | c | c | c | c | c | c | c | c | c | c | c | c | c | c |  }
      &  $A_1$  & $B_1$ &  $\ldots $ & $\ldots $ & $A_{i-1}$ & $B_{i-1} $ & $A_i$ & $B_i $ & $A_{i+1}$ & $B_{i+1}$ & $\ldots $ & $\ldots $ & $A_n$ & $B_n$  \\ \hline
 $t_0$ & $0$ & $ 0$ &   $0$& $0$ & $0$  & $0$ & $0$ & $0$ &  $0$  &  $0$ & $0$  &  $0$ &$0$ &$0$ \\ \hline
 $t_2$ & $1$ & $ 1$ & $0$  & $1$ & $0$ & $1$ & $0$ &  $1$ & $0$ & $1$ & $0$  &  $1$ &$0$ &$1$ \\ \hline

\end{tabular}}
\end{center}
\caption{$r_1$} \label{kuva6}
\end{figure}
Assume then that $1<i$. We now let $r_2:= \{t_0,t_3\}$ where we define, for $X \in R_n$,
\begin{itemize}
\item $t_3(X)=$$\begin{cases}0 &\textrm{if }X=A_j \textrm{ for } 1<j <i,\textrm{ or }X=B_j \textrm{ for } i\leq j\leq n, \\
1 &\textrm{otherwise.}
\end{cases}$
\end{itemize}

\begin{figure}[h]
\center
\begin{center}
    \scalebox{1.1}{\begin{tabular}{  c | c | c | c | c | c | c | c | c | c | c | c | c | c | c | c | c |  }
      &  $A_1$  & $B_1$ &  $\ldots $ & $\ldots $ & $A_{i-1}$ & $B_{i-1} $ & $A_i$ & $B_i $ & $A_{i+1}$ & $B_{i+1}$ & $\ldots $ & $\ldots $ & $A_n$ & $B_n$  \\ \hline
 $t_0$ & $0$ & $ 0$ &   $0$& $0$ & $0$  & $0$ & $0$ & $0$ &  $0$  &  $0$ & $0$  &  $0$ &$0$ &$0$ \\ \hline
 $t_3$ & $1$ & $ 1$ & $0$  & $1$ & $0$ & $1$ & $1$ &  $0$ & $1$ & $0$ & $1$  &  $0$ &$1$ &$0$ \\ \hline

\end{tabular}}
\end{center}
\caption{$r_2$} \label{kuva6}
\end{figure}
For item 5, note that since $i < j \leq n$, we have that $i < n$. We let $r_3:=\{t_0,t_4,t_5,t_6\}$ where we define, for $X \in R_n$,
\begin{itemize}

\item $t_4(X)=$$\begin{cases}0 &\textrm{if }X=A_1,\textrm{ or }X=B_j\textrm{ for } j \leq i, \\
1 &\textrm{otherwise,}
\end{cases}$

\item $t_5(X)=$$\begin{cases}0 &\textrm{if }X=A_j \textrm{ for }1<j \leq i,\textrm{ or }X=B_j \textrm{ for } i< j, \\
1 &\textrm{otherwise,}
\end{cases}$

\item $t_6(X)=$$\begin{cases}0 &\textrm{if }X=A_j \textrm{ for }i<j , \\
1 &\textrm{otherwise.}
\end{cases}$
\end{itemize}

\begin{figure}[h]
\center
\begin{center}
    \scalebox{1.1}{\begin{tabular}{  c | c | c | c | c | c | c | c | c | c | c | c | c | c | c | c | c |  }
      &  $A_1$  & $B_1$ &  $\ldots $ & $\ldots $ & $A_{i-1}$ & $B_{i-1} $ & $A_i$ & $B_i $ & $A_{i+1}$ & $B_{i+1}$ & $\ldots $ & $\ldots $ & $A_n$ & $B_n$  \\ \hline

$t_0$ & $0$ & $0$  & $0$& $0$  &$0$&$0$&$0$&$0$&$0$&$0$&$0$& $0$ &$0$ &$0$\\  \hline

$t_4$ & $0$ & $0$ & $1$ & $0$ &$1$& $0$ &$1$&$0$&$1$&$1$&$1$& $1$ &$1$ &$1$\\ \hline

 $t_5$ & $1$ & $1 $ & $0$  & $1$ & $0$ & $1$ & $0$ &  $1$ & $1$ & $0$ & $1$  &  $0$ &$1$ &$0$ \\ \hline

 $t_6$ & $1$ & $ 1$ &   $1$& $1$ & $1$  & $1$ & $1$ & $1$ &  $0$  &  $1$ & $0$  &  $1$ &$0$ &$1$ \\ \hline

\end{tabular}}
\end{center}
\caption{$r_3$} \label{kuva7}
\end{figure}
Again, it is straightforward to check that items 1-5 hold. This concludes the proof of Lemma \ref{apu4}.\qed

\end{proof}

From Lemma \ref{mainlemma}, \ref{apu2}, \ref{apu3} and \ref{apu4} we obtain Theorem \ref{closed}. Using this we can prove the following theorem.
\begin{theorem}
For no natural number $k$, there exists a sound and complete $k$-ary axiomatization of the finite implication problem for unary independence atoms and keys taken together.
\end{theorem}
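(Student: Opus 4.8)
The plan is to derive the non-existence of any finite ($k$-ary) axiomatization from Theorem~\ref{closed} together with Lemma~\ref{finimpl} and Lemma~\ref{genlem} by a standard diagonalization over the arity. First I would fix an arbitrary natural number $k$ and choose $n$ large enough that $2n-1 \geq k$, for instance $n := k$. The idea is that $\Si_n$ is a family of instances on which $k$-ary rules are too weak to detect the implication $\Si_n \models_{\mathrm{FIN}} \key{A_1B_1}$, even though that implication holds.

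The key steps, in order, are as follows. \textbf{(1)} By Lemma~\ref{finimpl}, $\Si_n \models_{\mathrm{FIN}} \key{A_1B_1}$, so any complete axiomatization $\mathfrak{R}$ of finite implication must satisfy $\Si_n \vdash_{\mathfrak{R}} \key{A_1B_1}$. \textbf{(2)} Observe that $\key{A_1B_1} \notin \Cu{\Si_n}$: by definition every key in $\Cu{\Si_n}$ contains some $C$ with $\key{C}\in\Si_n$, i.e.\ some $B_iA_{i+1}$, and no such $B_iA_{i+1}$ is contained in $\{A_1,B_1\}$ (here one uses $n\geq 2$, and for $n=2$ one checks $B_2A_1 \not\sub A_1B_1$, $B_1A_2\not\sub A_1B_1$ directly; also $\key{A_1B_1}$ is not obtained by upward closure from a trivial key since $\key{R_n}$ is the only trivial key and $R_n \neq A_1B_1$). \textbf{(3)} Now suppose toward a contradiction that $\mathfrak{R}$ is sound and complete for finite implication and $k$-ary. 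Consider an inference $\gamma = [\sigma_1,\dots,\sigma_m]$ of $\key{A_1B_1}$ from $\Si_n$; by minimality we may assume it is a shortest such inference. Since $\key{A_1B_1}\notin\Si_n$, the last step applies some rule of $\mathfrak{R}$, of the form $\frac{A_1\ \cdots\ A_l}{B}$ with $l\leq k\leq 2n-1$, to previously derived constraints $\sigma_{i_1},\dots,\sigma_{i_l}$, each of which lies in $\Si_n$ or is itself inferred from $\Si_n$ — hence each $\sigma_{i_j}$ satisfies $\Si_n \models_{\mathrm{FIN}} \sigma_{i_j}$, and by Theorem~\ref{closed} applied to whatever single element of $\Si_n$ is ``missing'' (more carefully: every $\sigma_{i_j}$ with $\Si_n\models_{\mathrm{FIN}}\sigma_{i_j}$ is either in $\Si_n$ or, since it is a $k$-ary finite implication of $\Si_n$ with $k\leq 2n-1$, lies in $\Cu{\Si_n}$). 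The cleanest route is: by Theorem~\ref{closed}, every $(2n-1)$-ary finite implication of $\Si_n$ belongs to $\Cu{\Si_n}$, and $\Cu{\Si_n}$ is closed under the rule $\ax7$ hence under $2n-1$-ary rules generally — so starting from $\Si_n \sub \Cu{\Si_n}$ and applying only $k$-ary sound rules one never leaves $\Cu{\Si_n}$, giving $\key{A_1B_1}\in\Cu{\Si_n}$, contradicting step~(2).

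To make step \textbf{(3)} precise I would prove the invariant by induction on the length of the inference: if $\sigma_1,\dots,\sigma_i$ are all in $\Cu{\Si_n}$ and $\sigma_{i+1}$ results from a sound $k$-ary rule applied to a subset $S$ of $\{\sigma_1,\dots,\sigma_i\}$ with $|S|\leq k \leq 2n-1$, then since $S \sub \Cu{\Si_n}$ we have $\Cu{\Si_n} \models \sigma_{i+1}$ by soundness, in particular $\Cu{\Si_n} \models_{\mathrm{FIN}} \sigma_{i+1}$; but every element of $\Cu{\Si_n}$ is a finite implication of $\Si_n$ (indeed $\Si_n \vdash_{\ax7} \Cu{\Si_n}$ and $\ax7$ is sound, using Theorem~1), so $\Si_n \models_{\mathrm{FIN}} \sigma_{i+1}$ and $\sigma_{i+1}$ is a $(\leq 2n-1)$-ary finite implication of $\Si_n$, whence $\sigma_{i+1}\in\Cu{\Si_n}$ by Theorem~\ref{closed}. (Theorem~\ref{closed} is literally stated for $\Si' = \Si_n\setminus\{\psi\}$, i.e.\ for implications using at most $2n-1$ of the $2n$ constraints of $\Si_n$; if a derived $\sigma_{i+1}$ happens to follow from all of $\Si_n$ but from no proper subset of size $\leq 2n-1$, that cannot happen because any $k\leq 2n-1$ premises use at most $2n-1$ of the constraints — this is exactly why $n=k$ is chosen so that $k \leq 2n-1$.) Taking $i$ up to $m-1$ and then $\sigma_m = \key{A_1B_1}$ gives $\key{A_1B_1}\in\Cu{\Si_n}$, the desired contradiction.

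The main obstacle — or rather the main point requiring care — is the bookkeeping in step~(3): one must argue that a $k$-ary rule, by definition, only ever consults $\leq k$ already-derived constraints, each of which is a finite implication of $\Si_n$ using (transitively) only $\Si_n$; the crucial quantitative fact is that a single rule application with $\leq k \leq 2n-1$ premises can be ``explained'' within $\Cu{\Si_n}$, which is precisely the content of Theorem~\ref{closed} once one notes $\Cu{\Si_n}$ is a sound closure of $\Si_n$. There is no genuinely hard computation here; the subtlety is purely the reduction from ``$k$-ary axiomatization'' to ``$\Cu{\Si_n}$ is closed under $(2n-1)$-ary finite implication'', and the observation that $\key{A_1B_1}$ escapes this closure. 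Since $k$ was arbitrary, no finite $k$ works, which proves the theorem.
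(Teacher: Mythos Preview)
Your argument is correct and follows exactly the paper's approach: choose $n$ with $2n-1\geq k$, invoke Lemma~\ref{finimpl} for $\Si_n\models_{\mathrm{FIN}}\key{A_1B_1}$, and use Theorem~\ref{closed} together with the observation that every element of $\Cu{\Si_n}$ is a $1$-ary consequence of $\Si_n$ (via $\ax7$) to show that any $k$-ary sound inference from $\Si_n$ stays inside $\Cu{\Si_n}$, which does not contain $\key{A_1B_1}$. Two cosmetic remarks: Lemma~\ref{genlem} is not used anywhere in this argument and can be omitted, and your choice $n:=k$ should be $n:=\max(k,2)$ so that Lemma~\ref{finimpl} and Theorem~\ref{closed} (both stated for $n\geq 2$) actually apply.
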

\begin{proof}
Let $k$ be a natural number, and let $n$ be such that $2n > k$. Then $\Si_n \models_{\rm FIN} \key{A_1B_1}$ by Theorem \ref{finimpl}. However, by the unary rule $\ax{7}$ and Theorem \ref{closed}, the closure of $\Si_n$ under $k$-ary finite implication is $\Cu{\Si_n}$. Since $\key{A_1B_1} \not\in \Cu{\Si_n}$, the claim follows.\qed

\end{proof}
Note that due to $\ax{4}$ and $\ax{2}$, for any non-unary $R_n$-independence atom $X \boto Y$ there exists a unary $A \boto B \not\in \Cu{\Si_n}$ such that $\{X \boto Y\} \models A \boto B$. Hence Theorem \ref{closed} can be extended to the case where $\phi$ is an independence atom of any arity. Therefore we obtain the following corollary.
\begin{corollary}
For no natural number $k$, there exists a sound and complete $k$-ary axiomatization of the finite implication problem for independence atoms and keys taken together.
\end{corollary}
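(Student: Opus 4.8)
The plan is to repeat the argument of the preceding theorem (non-existence of a $k$-ary axiomatization in the unary case), the only new phenomenon being that a derivation over the full language may use non-unary independence atoms as intermediate lines. Fix $k$, choose $n$ with $2n>k$, and assume towards a contradiction that $\mathfrak{R}$ is a sound and complete $k$-ary axiomatization for keys and independence atoms of arbitrary arity. Since $\Si_n\models_{\rm FIN}\key{A_1B_1}$ (Lemma~\ref{finimpl}), completeness yields a derivation $[\sigma_1,\dots,\sigma_m]$ of $\key{A_1B_1}$ from $\Si_n$ using the rules of $\mathfrak{R}$; the goal is to show that every $\sigma_i$ lies in $\Cu{\Si_n}$ or is trivially valid, contradicting $\key{A_1B_1}\notin\Cu{\Si_n}$.

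The first step is to strengthen Theorem~\ref{closed} to conclusions of arbitrary arity: if $\psi\in\Si_n$ and $\Si_n\setminus\{\psi\}\models_{\rm FIN}\phi$ for a key or an independence atom $\phi$ of any arity, then $\phi\in\Cu{\Si_n}$ or $\phi$ is trivially valid (of the form $\emptyset\boto Z$, $Z\boto\emptyset$, or $\key{R_n}$). For keys and unary atoms this is Theorem~\ref{closed}. For a non-trivial, non-unary $\phi=X\boto Y$ one observes that, since the pairs $\{A_i,B_i\}$ are pairwise disjoint, there exist $A\in X$ and $B\in Y$ with $\{A,B\}\ne\{A_i,B_i\}$ for all $i$, so the unary atom $A\boto B$ is outside $\Cu{\Si_n}$; as $X\boto Y\models A\boto B$ by $\ax2$ and $\ax4$, from $\Si_n\setminus\{\psi\}\models_{\rm FIN}\phi$ we would get $\Si_n\setminus\{\psi\}\models_{\rm FIN}A\boto B$, contradicting Theorem~\ref{closed}. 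So a non-trivial, non-unary atom is never even a finite implication of $\Si_n\setminus\{\psi\}$.

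The second step is the induction along the derivation. If $\sigma_i\in\Si_n$ we are done. Otherwise $\sigma_i$ results from a rule applied to at most $k$ earlier lines, each of which --- by the induction hypothesis --- is a semantic consequence of a single element of $\Si_n$ (a key $\key D\in\Cu{\Si_n}$ from the corresponding $\key C\in\Si_n$ with $C\sub D$ by upward closure; a unary atom of $\Cu{\Si_n}$ already lies in $\Si_n$) or of the empty set. Hence those $\le k$ premises are jointly implied by some $\Gamma\sub\Si_n$ with $|\Gamma|\le k<2n=|\Si_n|$, so by $\Si_n\setminus\{\psi\}$ for a suitable $\psi$; soundness of the applied rule makes $\sigma_i$ this finite implication, and the strengthened Theorem~\ref{closed} places $\sigma_i$ in $\Cu{\Si_n}$ or among the trivially valid constraints. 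In particular $\sigma_m=\key{A_1B_1}$ would be in $\Cu{\Si_n}$, which it is not --- a contradiction, so no such $\mathfrak{R}$ exists.

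The main obstacle is the first step: one must be certain that no non-unary atom arising along a short derivation can slip past Theorem~\ref{closed}, in particular that constancy atoms such as $A_i\boto A_i$ --- which together with the keys of $\Si_n$ quickly generate $\key{A_1B_1}$ --- are never $(2n-1)$-ary finite consequences of $\Si_n$. The decomposition/symmetry argument above is precisely what rules this out, and is the reason the threshold $2n-1$ is sharp. The remaining book-keeping (treating $0$-ary rules in the base case, noting that trivially valid atoms contribute nothing to a derivation of $\key{A_1B_1}$, and checking $\key{A_1B_1}\notin\Cu{\Si_n}$) is exactly as in the unary case.
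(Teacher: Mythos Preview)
Your proof is correct and follows essentially the same route as the paper: you extend Theorem~\ref{closed} to conclusions of arbitrary arity via decomposition and symmetry (exactly as the paper does with $\ax{4}$ and $\ax{2}$), and then rerun the closure argument from the preceding theorem. Your treatment is in fact a bit more careful than the paper's---you explicitly separate out the trivially valid atoms $\emptyset\boto Z$, $Z\boto\emptyset$, $\key{R_n}$, which the paper glosses over---but the idea is the same.
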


\section{Conclusion}
We have studied the implication problem of unary independence atoms and keys taken together, both in the general and in the finite case. We gave a finite axiomatization of the general implication problem and showed that the finite implication problem has no finite axiomatization. The non-axiomatizability result holds also in case the arity of independence atoms is not restricted to one. It remains open whether the general implication problem for arbitrary independence atoms and keys enjoys a finite axiomatization, and whether the finite implication problem is undecidable.

\bibliographystyle{splncs03}
\bibliography{biblio}







\end{document}